\def \ox{\otimes}
\def \be {\begin{equation}}
\def \ee {\end{equation}}
\newcommand{\Tr}{\mathrm{Tr}}
\newcommand{\ket}[1]{|#1\rangle}
\newcommand{\bra}[1]{\langle#1|}
\newcommand{\kett}[1]{|#1\rangle\!\rangle}
\newcommand{\braa}[1]{\langle\!\langle#1|}
\def \cH{{\cal H}}
\def \sofc2{{\cal S}({\mathbb C}^2)}
\def\>{\rangle}
\def\<{\langle}
\def\opr{\operatorname}
\newcommand{\ket}[1]{\ensuremath{|#1\rangle}}
\newcommand{\bra}[1]{\ensuremath{\langle#1|}}
\newcommand{\beq}{\begin{equation}}
\newcommand{\eeq}{\end{equation}}
\newcommand{\bqa}{\begin{eqnarray}}
\newcommand{\eqa}{\end{eqnarray}}
\newcommand{\Tr}{\textrm{Tr}}
\newcommand{\forget}[1]{}
\renewcommand{\theequation}{S\arabic{equation}}
\renewcommand{\thefigure}{S\arabic{figure}}
\def \cH{{\mathcal H}}
\newtheorem{theorem}{Theorem}
\begin{document}

\title{Predicting symmetries of quantum dynamics with optimal samples}		

\begin{abstract}
Identifying symmetries in quantum dynamics, such as identity or time-reversal invariance, is a crucial challenge with profound implications for quantum technologies. We introduce a unified framework combining group representation theory and subgroup hypothesis testing to predict these symmetries with optimal efficiency. By exploiting the inherent symmetry of compact groups and their irreducible representations, we derive an exact characterization of the optimal type-II error (failure probability to detect a symmetry), offering an operational interpretation for the quantum max-relative entropy. In particular, we prove that parallel strategies achieve the same performance as adaptive or indefinite-causal-order protocols, resolving debates about the necessity of complex control sequences. Applications to the singleton group, maximal commutative group, and orthogonal group yield explicit results: for predicting the identity property, Z-symmetry, and T-symmetry of unknown qubit unitaries, with zero type-I error and type-II error bounded by $\delta$, we establish the explicit optimal sample complexity which scales as $\mathcal{O}(\delta^{-1/3})$ for identity testing and $\mathcal{O}(\delta^{-1/2})$ for T/Z-symmetry testing. These findings offer theoretical insights and practical guidelines for efficient unitary property testing and symmetry-driven protocols in quantum information processing.
\end{abstract}

\author{Masahito Hayashi}\email{hmasahito@cuhk.edu.cn}
\affiliation{School of Data Science, The Chinese University of Hong Kong, Shenzhen, Longgang District, Shenzhen, 518172, China}
\affiliation{International Quantum Academy, Futian District, Shenzhen 518048, China}
\affiliation{Graduate School of Mathematics, Nagoya University, Nagoya, 464-8602, Japan}
\author{Yu-Ao Chen}\email{yuaochen@hkust-gz.edu.cn}
\affiliation{Thrust of Artificial Intelligence, Information Hub,\\
The Hong Kong University of Science and Technology (Guangzhou), Guangzhou 511453, China}
\author{Chenghong Zhu}\email{czhu854@connect.hkust-gz.edu.cn}
\affiliation{Thrust of Artificial Intelligence, Information Hub,\\
The Hong Kong University of Science and Technology (Guangzhou), Guangzhou 511453, China}
\author{Xin Wang}\email{felixxinwang@hkust-gz.edu.cn}
\affiliation{Thrust of Artificial Intelligence, Information Hub,\\
The Hong Kong University of Science and Technology (Guangzhou), Guangzhou 511453, China}
\maketitle

\textbf{Introduction.}---
Predicting quantum properties is a fundamental task in quantum information science, designed at efficiently identifying specific characteristics of quantum systems and serving as a crucial step in understanding their symmetries and structural features. This task is crucial across various domains, where distinguishing certain classes of quantum operations or identifying particular state properties allows for targeted manipulation and control, contributing to advancements in quantum computing, many-body physics, cryptography, and communication. 

The standard method for predicting quantum dynamics properties is to apply quantum process tomography~\cite{QCQI, ChuangNielsen, PCP} and then analyze the classical data.  Although many improvements and variants have been proposed~\cite{tomo1, tomo2, tomo3, tomo4, tomo5, tomo6, tomo7, tomo8, tomo9, tomo10, tomo11, tomo12, tomo13, tomo14, tomo15, tomo16}, the resource consumption of this approach still grows quickly as the system becomes large. This inefficiency has motivated the development of predicting properties of quantum dynamics via quantum algorithms~\cite{Wang2011a,Montanaro2013a,Janzing2005,Ji2009,She2023,qamemory3,yuao}, which aim to efficiently identify key features of a quantum process without necessitating a complete classical reconstruction.

A well-known quantum dynamic property predicting task involves determining whether a given operation is the identity unitary, a problem closely linked to the equivalence problem~\cite{Janzing2005,Ji2009} and essential for calibrating quantum systems~\cite{identityCali}.
Other property-specific testing protocols focus on important symmetry properties such as time-reversal symmetry (T-symmetry) unitaries and diagonal unitaries (Z-symmetry). T-symmetry arises from representations of the orthogonal group and highlights the fundamental role of complex numbers in quantum physics. It has significant implications for both theoretical frameworks and experimental aspects of quantum physics~\cite{ExpI1, ExpI2, ExpI3, tsymm1}, including demonstrating quantum advantage through the use of quantum memory~\cite{qamemory1, qamemory2, qamemory3}.
It has further motivated the development of the quantum resource theory of imaginarity~\cite{RTOI1, RTOI2}. On the other hand, Z-symmetry are often considered less resource-intensive in experimental physics and play a crucial role in efficiently generating quantum randomness~\cite{randomness1, randomness2}, simulating classical thermodynamics~\cite{diagonal1}, and enabling effective quantum circuit synthesis~\cite{diagonal2}.

These properties are fundamental in understanding the complex and non-intuitive behavior of quantum systems. To evaluate the capabilities of quantum computers in testing or predicting these features and to determine their fundamental limits, it is essential to optimally and exactly distinguish them from others.
The key metric for this task is the type-II error, which measures the likelihood that a quantum algorithm fails to identify a system with certain properties, corresponding to hypothesis testing between the set of unitaries with those properties and the entire unitary group.

However, determining the fundamental limits of distinguishability and the exact performance in the general unitary hypothesis testing remains largely unexplored, particularly when multiple dynamics are considered simultaneously. Recent work~\cite{yuao} has also considered the hypothesis testing of T-symmetry and Z-symmetry, which established the optimal protocol only for cases with up to 6 queries. The challenge is further compounded by the complexity of searching for optimal testing protocols for quantum dynamic properties, which require not only sequential and adaptive strategies but also the incorporation of indefinite causal order, as the latter has demonstrated substantial advantages in various quantum information processing tasks~\cite{Bavaresco_2021,chiribella2021indefinite,zhao2020quantum,chapeau2021noisy}. It is therefore essential to develop a general framework to gain a deeper understanding of quantum systems.

\begin{figure}[h]
    \centering
    \includegraphics[width=0.9\linewidth]{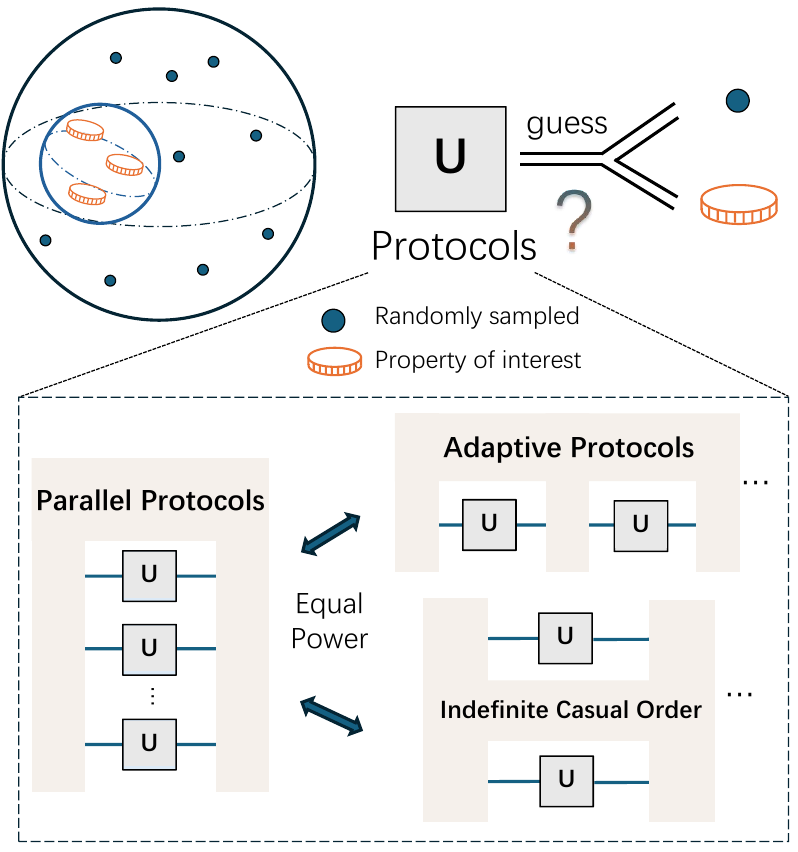}
    \caption{An illustration of hypothesis testing for quantum dynamics of interest, considering protocols such as parallel, adaptive, and indefinite causal order strategies.}
    \label{fig:enter-label}
\end{figure}

In this paper, we towards answer these questions by using the group representation theory. For the property prediction of quantum dynamics task, we demonstrate that parallel strategies can achieve the fundamental limits of the optimal type-II error with adaptive and indefinite-causal-order operations offering no advantage over parallel strategies. We further show the optimal type-II error can also be characterized by the quantum max-relative entropy of the performance operators associated with the two unitary groups, which offers another operational meaning for it. 

As notable applications, we provide the analytical formulas for testing identity, diagonal and real unitaries in the qubit case, characterizing their optimal type-II error scaling behavior. Under zero type-I error and type-II error bounded by $\delta$, we show that the sample complexity of discriminating identity unitary scales as $\mathcal{O}(\delta^{-1/3})$ and it has a natural extension to discriminate an arbitrary unitary $U$, if given access to the $U^{\dagger}$. For the application of testing diagonal and real unitaries, we demonstrate that the sample complexity scales as $\mathcal{O}(\delta^{-1/2})$. We also show that the optimal parallel tests for Z-symmetry and T-symmetry unitaries do not require the ancilla system, whereas testing the identity unitary requires. This distinction provides additional insights into designing protocols for quantum dynamics analysis.

\textbf{Unitary subgroup hypothesis
testing.}--- Let $G$ be a compact group with a subgroup $G_0$ and let $\mu$ and $\mu_{0}$ denote their respective Haar measures. We focus on a unitary representation $f$ of $G$ over a finite-dimensional Hilbert space $\cH$. Under the unitary operation defined by $f$, we study the hypothesis testing problem of distinguishing whether the true unitary is subject to $\mu$ or $\mu_{0}$, including cases where $G$ or $G_0$ is a finite group. This setting is motivated by the fact that, in the absence of any prior information, the unitary operation is naturally modeled as a uniform mixture under the Haar measure, corresponding to the "white noise" operation~\cite{MMH1,MMH2,MMH3,MMH4}. The task then becomes detecting whether a given operation is generated by an element of $G_0$ or is effectively indistinguishable from the white noise operation.

We also introduce the Choi operators of a quantum channel and the performance operator. In detail, the Choi operator of a unitary $U$ is defined as $\kett{U}\braa{U}$, where
\begin{equation}
    \kett{U}\coloneqq\sum_{k,k'}u_{k,k'}\ket{k, k'} \;\text{ for }U=\sum_{k,k'}u_{k,k'}\ket{k}\bra{k'}.
\end{equation}
It is noted that $\kett{U}\braa{U}$ is isomorphic to $U\ox U^\dagger$. Then, the performance operator~\cite{poperator} of a representation $f$ on the Haar measure of group $G$ is then defined as,
\begin{equation}\label{eq:average performance state}
    \rho_{\mu}^f \coloneqq\mathbb E_{U\sim\mu}\kett{f(U)}\braa{f(U)}.
\end{equation}
Similarly, the performance operator for the group $G_0$ is represented by $\rho_{\mu_0}^f$.

For predicting symmetries of quantum dynamics task, our goal is to use a tester or the measurement operator to distinguish an element of the subgroup $G_0$ from the white noise operation
$\mu$. We then define the measurement operator $T_G$ associated with $\mu$ and $T_{G_0}$ associated with $G_{0}$, as positive semi-definite operators acting on the system $\mathcal H^{\ox 2}$. 

Given the importance of accurately identifying the properties of quantum dynamics and avoiding incorrect predictions, we adopt the best practices of hypothesis testing in quantum information theory. Specifically, it is natural to consider the worst-case analysis of type-I error, the largest failure probability to detect an element $g \in G_0$, as follows:
\begin{equation}
    \alpha(T_{G_0}, f) = 1-\min_{g \in G_0} \Tr (T_{G_0} |f(g)\rangle\!\rangle \langle\!\langle f(g)|).
\end{equation}
The reason is to ensure that the protocol performs reliably for the unitaries of interest in the worst-case scenario, guaranteeing reliable results even under the most challenging cases. 
It is a standard method in hypothesis testing to consider 
the worst case probability when the hypothesis to be detected is composed of multiple elements \cite{BP,HY,Lami}.
We also note there is another work that considers the average error~\cite{yuao} and we will show the equivalence of both worst-case and average-case analysis. Accordingly, we define the average-case type-I and type-II errors as follows:
\begin{equation}
\begin{aligned}
    &\Bar{\alpha}(T_{G_0}, f) = 1 - \Tr (T_{G_0} \rho^f_{\mu_0}), \; \Bar{\beta}(T_{G_0}, f) = \Tr (T_{G_0} \rho^f_{\mu}).
\end{aligned}
\end{equation}
Consider the scenario where the true unitary governed by $\mu_{0}$ must be identified within an error tolerance of $\epsilon$, then the problem can be formalized as the asymmetric hypothesis testing, which asks about how small one of the errors can be subject to constraints on the other error. To account for the search for optimal strategies, we consider scenarios where the operator or protocols $T_G+T_{G_0}$ must satisfy the indefinite-causal-order (ICO) strategy linear conditions~\cite{Bavaresco_2021}. The detailed formulation for both worst-case and average-case analysis are given by,
\begin{equation}
    {\beta}^{f}_{ICO}(\epsilon) :=  \min_{T_G+T_{G_0}:ICO} \{ 
\Bar{\beta}(T_{G_0}, f) :
\alpha(T_{G_0}, f) \le \epsilon \} .
\end{equation}
\begin{equation}
    \Bar{\beta}^{f}_{ICO}(\epsilon) :=  \min_{T_G+T_{G_0}:ICO} \{ 
\Bar{\beta}(T_{G_0}, f) :
\Bar{\alpha}(T_{G_0}, f) \le \epsilon \} .
\end{equation}

On the other side, we also examine the case of parallel protocols. That is, an intuitive approach involves introducing a reference system $\mathcal A$, choosing a pure input state $\ket{\psi}\in\mathcal H\ox\mathcal A$, operating $f(g)_{\cH}$ on $\ket{\psi}$ and choosing a measurement $M\subset\mathcal L^\dagger(\mathcal H\ox\mathcal A)$ on the final state. For this case, one can define,
\begin{equation}
    {\beta}^{f}_{PAR}(\epsilon) :=\inf_{\mathcal A}\min_{\tiny \begin{array}{c}
         |\psi\rangle\in {\mathcal H}\otimes{\mathcal A}\\
         0\preceq M\preceq I_{\mathcal H\ox\mathcal A} 
    \end{array}} \{ 
\Bar{\beta}(T_{G_0}, f) :
\alpha(T_{G_0}, f) \le \epsilon \},
\end{equation}
where the formalized tester $T_{G_0}$ is identified by $\ket{\psi}$ and $M$.

\textbf{Main Results.}--- Next, we present the main theorem. We begin by examining the inclusion of different operational strategies, then establish a connection between the type-II error and the max-relative entropy, and finally provide the explicit calculation of the type-II error. The formal statement is presented as follows,
\begin{theorem}[Optimal type-II error of unitary subgroup hypothesis testing]\label{thm:theorem1}
For unitary subgroup hypothesis testing involving compact group $G$ and its subgroup $K$ with unitary representation $f$ and an error tolerance of $\epsilon$, the following holds,
\begin{equation}
    \beta^{f}_{PAR}(\epsilon)
    = \beta^{f}_{ICO}(\epsilon) 
    = \Bar{\beta}^{f}_{ICO}(\epsilon) 
    = (1-\epsilon)e^{-D_{\max} (\rho^f_{\mu_0}\|\rho^f_{\mu})}.
\end{equation}
where $\rho^f_{\mu_0}$ and $\rho^f_{\mu}$ are performance operator of a representation $f$ on the Haar measure of group $G_0$ and $G$, respectively (cf.~Eq.~\eqref{eq:average performance state}).
\end{theorem}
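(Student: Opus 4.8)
The plan is to show that the four quantities agree by squeezing them between two matching bounds. Three of them are ordered for free: $\bar{\beta}^{f}_{ICO}(\epsilon)\le\beta^{f}_{ICO}(\epsilon)\le\beta^{f}_{PAR}(\epsilon)$. The first inequality holds because an average never exceeds a worst case, so $\bar{\alpha}(T_{G_0},f)\le\alpha(T_{G_0},f)$ and the constraint $\alpha(T_{G_0},f)\le\epsilon$ is more restrictive than $\bar{\alpha}(T_{G_0},f)\le\epsilon$; the second because every parallel tester (the one determined by an input state and a binary POVM) is a particular ICO tester, so the ICO minimum is taken over a larger feasible set. It then remains to prove the converse bound $\bar{\beta}^{f}_{ICO}(\epsilon)\ge(1-\epsilon)e^{-D_{\max}(\rho^f_{\mu_0}\|\rho^f_{\mu})}$ together with the achievability bound $\beta^{f}_{PAR}(\epsilon)\le(1-\epsilon)e^{-D_{\max}(\rho^f_{\mu_0}\|\rho^f_{\mu})}$; the two pin all four to the common value.

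The converse is immediate from the definition of the max-relative entropy. Since $G_0\subseteq G$ we have $\mathrm{supp}\,\rho^f_{\mu_0}\subseteq\mathrm{supp}\,\rho^f_{\mu}$, so $D_{\max}(\rho^f_{\mu_0}\|\rho^f_{\mu})$ is finite and $\rho^f_{\mu_0}\preceq e^{D_{\max}(\rho^f_{\mu_0}\|\rho^f_{\mu})}\rho^f_{\mu}$. For any ICO tester with $\bar{\alpha}(T_{G_0},f)\le\epsilon$, i.e. $\Tr(T_{G_0}\rho^f_{\mu_0})\ge1-\epsilon$, positivity of $T_{G_0}$ gives $1-\epsilon\le\Tr(T_{G_0}\rho^f_{\mu_0})\le e^{D_{\max}(\rho^f_{\mu_0}\|\rho^f_{\mu})}\Tr(T_{G_0}\rho^f_{\mu})=e^{D_{\max}(\rho^f_{\mu_0}\|\rho^f_{\mu})}\bar{\beta}(T_{G_0},f)$, and minimizing over feasible testers yields the bound. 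Since this argument uses only $T_{G_0}\succeq0$, the same lower bound holds a fortiori for $\beta^{f}_{ICO}$ and $\beta^{f}_{PAR}$.

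The achievability is the heart of the matter, and is where the group structure enters. Because the performance operators are Haar averages, $\rho^f_{\mu}$ commutes with $f(U)\otimes I$ for every $U\in G$ and $\rho^f_{\mu_0}$ commutes with $f(U)\otimes I$ for every $U\in G_0$; applying Schur's lemma to the $G_0$-isotypic decomposition of $f$ (refined within the $G$-isotypic components) puts both operators into a common block-diagonal form, in which $D_{\max}(\rho^f_{\mu_0}\|\rho^f_{\mu})$ reduces to a maximum of blockwise max-relative entropies and the top eigenspace $W$ of $(\rho^f_{\mu})^{-1/2}\rho^f_{\mu_0}(\rho^f_{\mu})^{-1/2}$ is an explicit $G_0$-invariant subspace. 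Guided by this, I would construct the parallel protocol whose accept-$G_0$ effect is proportional to $(\rho^f_{\mu})^{-1/2}\Pi_W(\rho^f_{\mu})^{-1/2}$, scaled so that $\Tr(T_{G_0}\rho^f_{\mu_0})=1-\epsilon$; the eigenspace property then gives $\bar{\beta}(T_{G_0},f)=(1-\epsilon)e^{-D_{\max}(\rho^f_{\mu_0}\|\rho^f_{\mu})}$, while $G_0$-covariance of this $T_{G_0}$ makes the worst-case and average-case type-I errors coincide, so $\alpha(T_{G_0},f)=\bar{\alpha}(T_{G_0},f)\le\epsilon$. The main obstacle, and the bulk of the work, is to verify that this covariant optimal $T_{G_0}$ genuinely arises from a parallel protocol — that there exist an ancilla-assisted input state $|\psi\rangle\in\mathcal H\otimes\mathcal A$ and a binary measurement producing it while satisfying the realizability constraints defining parallel testers. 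This holds once the ancilla is large enough to carry the multiplicity of the block $W$, and it is precisely here that the ancilla is indispensable for identity testing but can be dropped for Z-/T-symmetry testing, where $W$ is multiplicity-free. Everything else — the converse and the two inclusions — is essentially one line each.
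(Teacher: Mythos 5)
Your overall architecture coincides with the paper's: the two free orderings $\bar{\beta}^{f}_{ICO}(\epsilon)\le\beta^{f}_{ICO}(\epsilon)\le\beta^{f}_{PAR}(\epsilon)$, the converse $\bar{\beta}^{f}_{ICO}(\epsilon)\ge(1-\epsilon)e^{-D_{\max}(\rho^f_{\mu_0}\|\rho^f_{\mu})}$ obtained purely from positivity of the tester and the operator inequality $\rho^f_{\mu_0}\preceq e^{D_{\max}}\rho^f_{\mu}$, and a matching achievability by a $G_0$-covariant parallel protocol are exactly the paper's skeleton (its chain of inequalities, Eq.~\eqref{KH1}, and Eq.~\eqref{KH2}). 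The converse and the inclusions are fine as you state them, and your observation that $G_0$-invariance of the accept effect forces worst-case and average type-I errors to coincide is also used implicitly in the paper.

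The genuine gap is in the achievability step, which you yourself label ``the bulk of the work'' and then dispose of in one sentence. Writing the candidate effect as $c\,(\rho^f_{\mu})^{-1/2}\Pi_W(\rho^f_{\mu})^{-1/2}$ does give the right ratio $\Tr(T\rho^f_{\mu_0})/\Tr(T\rho^f_{\mu})=e^{D_{\max}}$, but nothing in your argument shows that this operator, with the normalization forced by $\Tr(T\rho^f_{\mu_0})=1-\epsilon$, is realizable by a parallel strategy, i.e.\ that it arises from some pure input $|\psi\rangle\in\mathcal H\otimes\mathcal A$ and an effect $0\preceq M\preceq I$ (equivalently, that it obeys the tester normalization constraint, not merely $T\succeq 0$). ``The ancilla is large enough to carry the multiplicity of the block $W$'' is not a proof and is not even the right condition: because the input is a \emph{pure} state, the orbit only explores a subspace $\mathbb{C}^{\min(d_\lambda,\,l n_\lambda)}$ of each multiplicity space, and one must exhibit the state explicitly. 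This is precisely where the paper does its real work: it first evaluates $e^{D_{\max}}=\max_{\eta}d_{\eta,G_0}^{-1}\sum_{\lambda}d_\lambda n_{\eta,\lambda}$ blockwise (Theorem~\ref{appendix:THM4}), then constructs the input state with weights $p_\lambda\propto d_\lambda n_{\eta,\lambda}$, the vector $|F\rangle=\sum_\lambda p_\lambda^{1/2}n_{\eta,\lambda}^{-1/2}|I_{n_{\eta,\lambda}}\rangle\!\rangle$, and the invariant test $T_0=I_{\eta,G_0}\otimes I_{d_{\eta,G_0}}\otimes|F\rangle\langle F|$, and verifies directly that every $g\in G_0$ is accepted with probability one and that $\Tr\,{\cal T}_G(|\psi\rangle\langle\psi|)T_0=d_{\eta,G_0}\bigl(\sum_\lambda d_\lambda n_{\eta,\lambda}\bigr)^{-1}$. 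Without an explicit construction of this kind (or an equivalent realizability lemma), your achievability claim — and hence the equality of all four quantities — is not established. Also, your criterion for dropping the ancilla (``$W$ multiplicity-free'') is vaguer than the actual condition $d_{\eta,G_0}n_{\eta,\lambda}\le n_\lambda$ used in the paper's Theorems~\ref{TH4}--\ref{TH3}.
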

The theorem has several implications, which we will detail individually as follows. We also note the detailed proof is provided in Section \ref{sec: appendix main proof} of the Supplementary Material. 
\paragraph{Parallel Protocols Achieve Optimality.} Our findings rigorously demonstrate that parallel protocols, even in their most straightforward implementation with pure state inputs are sufficient to achieve the optimal type-II error in the task of unitary subgroup hypothesis testing. The simplicity of parallel protocols makes them a particularly practical and efficient approach to hypothesis testing of properties of quantum dynamics.

Notably, our results also reveal that sequential and other causal strategies do not provide any advantage over parallel protocols in this context. The parallel protocols are shown to be as powerful as indefinite causal order strategies, establishing their equivalence in effectiveness. This conclusion serves as a counterexample to previous assertions in areas such as channel discrimination~\cite{Bavaresco_2021}, quantum communication~\cite{chiribella2021indefinite}, and quantum metrology~\cite{zhao2020quantum, chapeau2021noisy}, where indefinite causal order or adaptive strategies were suggested to offer a distinct advantage.

\paragraph{Connection to Max-Relative Entropy.} The second implication of the main theorem is the connection of the average type-II error with the quantum max-relative entropy~\cite{Datta2009}, which can be understanded as the sandwiched R\'enyi divergence of order $\infty$. The definition of the quantum max-relative entropy is given by $D_{\max}(P\| Q) := \min\{t: e^tQ \geq P\}$, where $P$ and $Q$ are positive semi-definite operators. 

The optimal type-II error with an error tolerance of $\epsilon$ can be determined through a direct calculation of $D_{\max} (\rho^f_{\mu_0}\|\rho^f_{\mu})$. The quantum max-relative entropy has found numerous applications in quantum information processing tasks~\cite{Datta2009, Bu2017, Wang2020, Zhu2024}, and here we provide it with a new operational interpretation by directly linking it to the task of quantifying the average type-II error in hypothesis testing of quantum dynamics. We also observe that the quantum max-relative entropy represents a further relaxation of all causal order strategies, as it only requires the tester to be a positive semidefinite operator without the need to satisfy additional linear conditions imposed by adaptive or IDC strategies.

\paragraph{Equivalence of Average and Worst-Case Analysis.} The theorem also establishes that hypothesis testing for quantum dynamics exhibits an equivalence between the worst-case and average-case scenarios. Specifically, the optimal performance of hypothesis testing protocols, typically characterized by type-II errors, remains invariant whether evaluated over all possible instances (worst-case) or averaged with respect to the distribution of interest. The similar phenomenon is also well-established for models with certain group covariance structures, as described by the Hunt-Stein theorem~\cite{Hol-Cov, Hol-Pro, Group2}.

\paragraph{Error Tolerance.} The theorem further establishes a linear dependence of the optimal type-II error, i.e. $\beta_{PAR}^f (\epsilon) = (1-\epsilon)\beta_{PAR}^f (0)$.  This result indicates that the protocols achieving the optimal type-II error remain optimal even when a non-zero tolerance $\epsilon$ is allowed for the type-I error. Consequently, for the later application section, we focus exclusively on the case where $\epsilon=0$.

Following Theorem~\ref{thm:theorem1}, we now discuss the method for evaluating the optimal type-II error. By introducing unitary representations of compact groups, we have the following theorem for the calculation of optimal type-II error,
\begin{theorem}[General solutions]~\label{thm:2}
The optimal type-II error for unitary subgroup hypothesis testing of unitaries draw from $\mu$ and $\mu_{0}$ is given by,
\begin{equation}\label{eq:th2}
    e^{-D_{\max} (\rho^f_{\mu_0}\|\rho^f_{\mu})} = \min_{\eta\in\hat{G_0}_f}\frac{d_{\eta, G_0}}{\sum_{\lambda\in\hat G_f} d_\lambda n_{\eta, \lambda}},
\end{equation}
where $\hat G_f$ and $\hat{G_0}_f$ denote the set of irreducible representation of group $G$ and $G_0$ appears in $f$, respectively;
$d_\lambda$ and $d_{\eta, G_0}$ denote the dimensions of representation $\lambda$ and $\eta$, respectively;
for fixed $\lambda,\eta$, $n_{\eta,\lambda}$ denotes the multiplicity of irreducible representation $\eta$ of group $G_0$ appears in an irreducible representation space corresponding to irreducible representation $\lambda$ of group $G$.
\end{theorem}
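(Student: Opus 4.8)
The plan is to compute $e^{-D_{\max}(\rho^f_{\mu_0}\|\rho^f_{\mu})}$ directly. Unwinding the definition $D_{\max}(P\|Q)=\min\{t:e^{t}Q\succeq P\}$ gives
\[
 e^{-D_{\max}(\rho^f_{\mu_0}\|\rho^f_{\mu})}=\max\{\,s\ge 0:\ \rho^f_{\mu}\succeq s\,\rho^f_{\mu_0}\,\},
\]
so the task is to find the largest $s$ for which this operator inequality holds, and the route is to make both performance operators completely explicit via Schur orthogonality. The structural input is bi-invariance of the Haar integral in \eqref{eq:average performance state}: since $(f(g_1)\otimes\overline{f(g_2)})\kett{f(U)}=\kett{f(g_1Ug_2^{-1})}$, the operator $\rho^f_{\mu}$ commutes with every $f(g_1)\otimes\overline{f(g_2)}$ for $g_1,g_2\in G$, and likewise $\rho^f_{\mu_0}$ commutes with every $f(k_1)\otimes\overline{f(k_2)}$ for $k_1,k_2\in G_0$. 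Decomposing $f|_{G_0}\cong\bigoplus_{\eta\in\hat{G_0}_f}\widetilde m_\eta\,\eta$ gives the isotypic splitting $\cH=\bigoplus_\eta W_\eta$ with $W_\eta\cong M_\eta\otimes U_\eta$, where $M_\eta$ is the multiplicity space and $\dim U_\eta=d_{\eta,G_0}$. Since the external products $\eta\boxtimes\overline{\eta'}$ are pairwise inequivalent irreducibles of $G_0\times G_0$, Schur's lemma forces both $\rho^f_{\mu}$ and $\rho^f_{\mu_0}$ to be block diagonal for $\cH\otimes\cH=\bigoplus_{\eta,\eta'}W_\eta\otimes W_{\eta'}$; and evaluating $\int_{G_0}\kett{f(k)}\braa{f(k)}\,d\mu_0(k)$ with the Schur orthogonality relations shows that $\rho^f_{\mu_0}$ is supported only on the diagonal blocks $\eta=\eta'$, where (after the reshuffling $W_\eta\otimes W_\eta\cong(M_\eta\otimes M_\eta)\otimes(U_\eta\otimes U_\eta)$) it equals $\ket{\omega_\eta}\bra{\omega_\eta}\otimes\frac{1}{d_{\eta,G_0}}I_{U_\eta\otimes U_\eta}$, with $\ket{\omega_\eta}\in M_\eta\otimes M_\eta$ the unnormalised maximally entangled vector ($\langle\omega_\eta|\omega_\eta\rangle=\widetilde m_\eta=\dim M_\eta$).

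Consequently $\rho^f_{\mu}-s\,\rho^f_{\mu_0}$ is block diagonal for $\bigoplus_{\eta,\eta'}W_\eta\otimes W_{\eta'}$; it is automatically positive semidefinite on the off-diagonal blocks, so positivity reduces to one condition per $\eta$. To obtain the stated formula I would refine the $G_0$-isotypic decomposition by recording inside $M_\eta$ which $G$-irreducible $\lambda\in\hat G_f$ each copy of $\eta$ descends from; the analogous Choi-vector reshuffling on the $G$-side then identifies $\rho^f_{\mu}$ restricted to $W_\eta\otimes W_\eta$ with $A_\eta\otimes I_{U_\eta\otimes U_\eta}$, where $A_\eta$ on $M_\eta\otimes M_\eta$ is block diagonal over $\lambda\in\hat G_f$, acting as $\frac{m_\lambda}{d_\lambda}$ times a projector on the $\lambda$-block ($m_\lambda$ being the multiplicity of $\lambda$ in $f$). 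After cancelling the strictly positive factor $I_{U_\eta\otimes U_\eta}$, the $\eta$-condition becomes the rank-one operator inequality $A_\eta\succeq\frac{s}{d_{\eta,G_0}}\ket{\omega_\eta}\bra{\omega_\eta}$ on $M_\eta\otimes M_\eta$. Since $\ket{\omega_\eta}$ lies in the support of $A_\eta$ (so $D_{\max}$ is finite), the standard rank-one criterion $A\succeq t\,\ket{\phi}\bra{\phi}\iff t\,\langle\phi|A^{-1}|\phi\rangle\le 1$ (with $A^{-1}$ the inverse on the support) gives that the largest admissible $s$ for block $\eta$ is $d_{\eta,G_0}/\langle\omega_\eta|A_\eta^{-1}|\omega_\eta\rangle$, and a short Schur-orthogonality computation yields $\langle\omega_\eta|A_\eta^{-1}|\omega_\eta\rangle=\sum_{\lambda\in\hat G_f}d_\lambda\,n_{\eta,\lambda}$, the multiplicities $m_\lambda$ cancelling. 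Minimising over $\eta\in\hat{G_0}_f$ reproduces \eqref{eq:th2}.

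All steps are routine Schur orthogonality once set up; the only genuine difficulty is keeping the bookkeeping straight through two nested isotypic decompositions — the $G$-decomposition of $f$ and its refinement into $G_0$-irreducibles — together with the Choi-vector/tensor-leg reshufflings $\kett{I_{M}\otimes\eta(k)}\cong\kett{I_{M}}\otimes\kett{\eta(k)}$ that expose the tensor structure. The whole argument hinges on two facts one must verify carefully: that $\rho^f_{\mu_0}$ has no off-diagonal ($\eta\neq\eta'$) blocks — Schur orthogonality between inequivalent $G_0$-irreducibles — and that $\rho^f_{\mu}$ restricted to each diagonal block $W_\eta\otimes W_\eta$ factorises as $A_\eta\otimes I_{U_\eta\otimes U_\eta}$ with $\ket{\omega_\eta}$ in the support of $A_\eta$. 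Given these, the reduction of $\rho^f_{\mu}\succeq s\,\rho^f_{\mu_0}$ to one rank-one inequality per block, and hence to \eqref{eq:th2}, is immediate.
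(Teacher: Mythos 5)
Your proposal is correct and follows essentially the same route as the paper's own proof (Theorem~\ref{appendix:THM4} in the Supplementary Material): both make $\rho^f_{\mu}$ and $\rho^f_{\mu_0}$ explicit via Schur orthogonality, reduce $D_{\max}$ to a maximization over the $G_0$-irreducibles $\eta$ through block-diagonality on the $\eta$-isotypic blocks, and then evaluate a rank-one-versus-block-diagonal comparison on the multiplicity space. Your rank-one criterion $A\succeq t\ket{\phi}\bra{\phi}\Leftrightarrow t\bra{\phi}A^{-1}\ket{\phi}\le 1$, giving $\bra{\omega_\eta}A_\eta^{-1}\ket{\omega_\eta}=\sum_{\lambda}d_\lambda n_{\eta,\lambda}$, is exactly the paper's Cauchy--Schwarz maximization of the Rayleigh quotient over $\ket{y}=\sum_\lambda y_\lambda\ket{v_{\eta,\lambda}}$ in disguise, so the two arguments coincide in substance.
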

Theorem~\ref{thm:2} provides a general solution for evaluating the optimal type-II error $\beta^{f}_{PAR}(\epsilon)$ by adapting 
Eq.~\eqref{eq:th2} to Theorem \ref{thm:theorem1}. This solution depends solely on the irreducible representations of the groups $G$ and $G_0$. This formula establishes a connection between the type-II error and the group structure, offering a systematic framework for analyzing hypothesis testing problems. The detailed proof can be found in Supplementary Material~\cite{SM}. 
This result also implies that when $G_0$ is a unitary $n$-design, it is impossible to distinguish between $U\sim\mu_{G_0}$ and $U\sim\mu_{\opr U(d)}$ within $n$ uses of $U$. Below, we will demonstrate how to apply it in the calculation of the type-II error associated with the fundamental properties of interest.

\textbf{Applications.}--- After introducing the main technical framework, we now discuss the situation when we considering the trivial subgroup $G_0 = \{I\}$, maximal commutative group $G_0\cong\operatorname{U}(1)^{d}$ and orthogonal group $G_0 =\operatorname{O}(d)$ by using the $n$-fold tensor product representation $f_n(g):=g^{\otimes n}$. These have applications in distinguishing the identity, diagonal unitaries as well as real unitaries from $d$-dimensional unitary group $G=\operatorname{U}(d)$. We then derive analytical formulas for the optimal type-II error as a function of the number of queries $n$ to the unknown unitaries.

\textbf{Identity Testing.} The first application we consider is the identity test, which involves determining whether an unknown quantum operation is equivalent to the identity transformation. This task is crucial for verifying and calibrating quantum systems~\cite{identityCali} and it serves as a foundation for more complex property testing tasks, such as distinguishing trivial operations from non-trivial ones, and plays a vital role in benchmarking quantum devices through protocols like randomized benchmarking.

In a technical detail, we consider the case where $G=\opr{U}(d)$ and the subgroup only contains a single element $G_0 = \{I\}$. Since $ G_0 = \{I\}$ is a trivial group, it possesses only the trivial irreducible representation, which implies $d_\eta=1$ and $\forall\,\lambda\in\hat G_f, n_{\eta,\lambda}=d_\lambda$. 
Then one can show that the optimal average type-II error is,
\begin{equation}
\beta_{PAR}^f(0) = 1/\sum_{\lambda\in\hat G_f} d_\lambda^2.
\end{equation}
Then, following the calculations presented in \cite{Population}, the analytical formulas for the optimal average type-II error when $d=2$ can be derived as follows,
\begin{equation}
\beta_{PAR}^{f_n}(0) = \frac{6}{(n+1)(n+2)(n+3)},
\end{equation}
where $n$ represents the number of queries to the unknown unitaries. We remain the detailed derivation in the Supplementary Material~\cite{SM}.

We also remark that the identity test can be extended to testing arbitrary quantum dynamics. If access to $U^\dagger$ is available, the key idea is to apply $U^\dagger$ to the unknown dynamics, effectively reducing the problem back to the identity test. The preparation of the reverse quantum dynamics can also leverage deterministic and exact protocols, as demonstrated in~\cite{reverseunknownd2, reverseunknownda}. These protocols do not require full knowledge of the process and are applicable to any unitary operation, making them broadly applicable and efficient for this task.

\textbf{Z-symmetry Testing.}
Besides the identity test, we will discuss its application in the Z-symmetry prediction. The Z-symmetry prediction is a fundamental tool in quantum information theory, used to identify and distinguish Z-symmetry from more general quantum dynamics. This has significant implications for various quantum tasks, as Z-symmetry play a crucial role in encoding phase information and preserving certain symmetries in quantum systems. The ability to efficiently test for Z-symmetry is essential in resource theories, where these operations often correspond to conserved resources such as coherence or asymmetry~\cite{RTOI1, RTOI2}. Furthermore, the Z-symmetry test is vital for optimizing and verifying quantum algorithms~\cite{diagonal2}.

In the technical detail, we consider the case $G=\opr U(d)$, the maximal commutative subgroup $G_0 \cong\opr U(1)^d$ and $n$-fold tensor product representation $f_n(g) := g^{\ox n}$. Then for the qubit case, we show that when the number of queries is $n$, the optimal average type-II error satisfies the following formula,
\begin{equation}
\beta_{PAR}^{f_n}(0) = 
\begin{cases}
    \displaystyle \frac{4}{(n+2)^2}, &\textrm{ if } n \textrm{ is even}; \vspace{1em}\\
    \displaystyle \frac{4}{(n+1)(n+3)}, &\textrm{ if } n \textrm{ is odd}. 
\end{cases}
\end{equation}
We remain the detailed derivation in the Supplementary Material~\cite{SM}.
This overall shows a $\mathcal{O}(n^{-2})$ scaling of optimal type-II error decay for any number of queries $n$, which solves the problem in the previous work~\cite{yuao} which they are only limited to prove the result for limited number of queries, i.e. 1-5 queries. Unlike the identity test, we demonstrate that achieving the optimal type-II error does not require a reference system.

\textbf{T-symmetry Testing.}
As a final application, we will consider the predication of T-symmetry unitaries. T-symmetry unitaries are critical in quantum mechanics as they represent operations that are invariant under time reversal and they correspond to real orthogonal matrices when expressed in specific bases and often arise in physical systems with intrinsic symmetries. Identifying T-symmetry unitaries is particularly important for understanding time-reversal invariance in quantum dynamics, a property that plays a central role in quantum thermodynamics, condensed matter physics, and symmetry-protected topological phases~\cite{ts_topo1, ts_topo2}.

With the similar spirit to the identity and Z-symmetry testing, we consider the case with $G = \opr U(d)$ and its subgroup $ G_0 = \opr O(d)$. 
We show that when the number of queries is $n$, the optimal average type-II error for $d=2$,
\begin{equation}
\beta_{PAR}^{f_n}(0) = 
\begin{cases}
    \displaystyle \frac{8}{(n+2)(n+4)}, &\textrm{ if } n \textrm{ is even}; \vspace{1em}\\
    \displaystyle \frac{8}{(n+1)(n+3)}, &\textrm{ if } n \textrm{ is odd}. 
\end{cases}
\end{equation}
We remain the detailed derivation in the Supplementary Material~\cite{SM}. We remark that for T-symmetry testing, the optimal type-II error does not improve when increasing the number of queries from an even $n$ to the next odd number of queries $n+1$, i.e. $\beta_{PAR}^{f_{2n+1}}(0) = \beta_{PAR}^{f_{2n}}(0)$. This result highlights a symmetry in the performance of the testing protocols, where the additional query in the odd case does not contribute to further reducing the error.

To summarize, we present three applications: the identification of identity, Z-symmetry, and 
T-symmetry unitaries, and plot their corresponding error decay as a function of the number of queries to the unknown unitaries in Fig.~\ref{fig:application_scaling}. The results demonstrate that identifying the identity unitary is easier compared to Z-symmetry and T-symmetry unitaries. This is intuitively understandable, as the identity test involves distinguishing a single element, making it simpler than the Z-symmetry and T-symmetry cases. In all scenarios, the type-II error rapidly approaches negligible values as the number of queries increases.
\begin{figure}[t]
    \centering
    \includegraphics[width=0.95\linewidth]{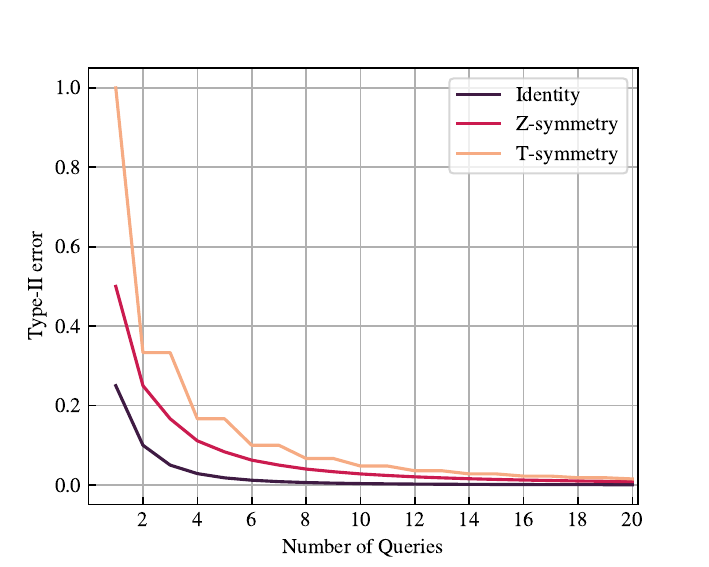}
    \caption{The optimal type-II error scaling with respect to the number of queries in identity, Z-symmetry and T-symmetry identification cases.}
    \label{fig:application_scaling}
\end{figure}

\textbf{Concluding remarks.}--- We have developed a framework for predicting properties of unknown quantum dynamics via unitary subgroup hypothesis testing. We have shown the equivalence between worst-case and average-case analysis and link the optimal type-II error to the quantum max-relative entropy. We have established the analytical formulas for precisely characterizing the optimal type-II error in applications such as identity, Z-symmetry and T-symmetry testing.  For these tasks, under zero type-I error and a type-II error bounded by $\delta$, the optimal sample complexity is $\mathcal{O}(\delta^{-1/3})$ for identity testing and $\mathcal{O}(\delta^{-1/2})$ for Z-symmetry and T-symmetry testing.
Furthermore, we have proved that all optimal type-II errors can be achieved using parallel protocols. This overturns assumptions about the necessity of complex control sequences, simplifying the testing process for quantum advantage experiments. 

This paper also shows that adaptive operation or indefinite casual order strategies does not improve the performance of hypothesis testing for group action. In fact, similar facts have been proved in the contest of secure network coding~\cite{HC1,HOKC,CH,H24,KOH,OKH,HS}, 
the discrimination of elements in finite group~\cite{Bavaresco2022},
and the estimation of group action~\cite{HayashiEstimation}. 
The preceding studies show that adaptive modifications of the input information in each attacked edge by the adversary does not improve the information gain by the adversary when all coding operations are given as linear operations for classical secure network coding~\cite{HC1,HOKC,CH,H24} and quantum secure network coding~\cite{KOH,OKH,HS}. But, Ref.~\cite{HC2} highlights that such improvement exists when nonlinear network code is applied. An intriguing avenue for future research is to investigate whether this phenomenon is fundamentally tied to group symmetry.

Another interesting direction is to explore a more general and efficient approach for evaluating the type-II error for unitaries of arbitrary dimensions.
Future research could also explore the hypothesis testing framework when the subgroup is chosen as the Weyl-Heisenberg group or the Clifford group. Similarly, when $G$ is the Clifford group, $G_0$ can be considered as the Weyl-Heisenberg group. 

\textbf{Acknowledgments.--} 
M. H. was supported in part by the National Natural Science Foundation of China (Grant No.~62171212). 
Y.-A. C., C. Z., and X. W. were partially supported by the National Key R\&D Program of China (Grant No.~2024YFE0102500), the National Natural Science Foundation of China (Grant. No.~12447107), the Guangdong Provincial Quantum Science Strategic Initiative (Grant No.~GDZX2403008, GDZX2403001), and the Guangdong Provincial Key Lab of Integrated Communication, Sensing and Computation for Ubiquitous Internet of Things (Grant No.~2023B1212010007).


\clearpage
\widetext

\begin{center}
\textbf{\large Supplemental Material for \\``Predicting symmetries of quantum dynamics with optimal samples''}
\end{center}

\renewcommand{\thedefinition}{S\arabic{definition}}
\setcounter{definition}{0}

\renewcommand{\thefigure}{S\arabic{figure}}
\setcounter{figure}{0}
\renewcommand{\theequation}{S\arabic{equation}}
\setcounter{equation}{0}
\renewcommand{\thesection}{\Roman{section}}
\setcounter{section}{0}
\setcounter{secnumdepth}{4}

In the supplemental material, we first introduce some preliminaries and elaborate on the basic setup of this work. Next, we provide the rigorous proof for the theorem presented in the main text.

\begin{itemize}
\item The proof of Theorem \ref{thm:theorem1} in the main text is provided in Section \ref{appendix:thm_3}, \ref{sec:KH1}, and \ref{sec:KH2}. The theorem statements are presented in Section \ref{appendix:thm_3} Theorem \ref{NMI}, while the main proof is broken down into Sections \ref{sec:KH1} and \ref{sec:KH2}.
\item The statement and proof of Theorem \ref{thm:2} in the maxin text are provided in Section \ref{appendix:cal_max_relative} Theorem \ref{appendix:THM4}.
\item Lastly, we present the detailed calculations for the three different applications in Section~\ref{sec:appendix case study}.
\end{itemize}

\section{Notations}
Let $G$ be a compact group, and $G_0$ be its subgroup.
Also, let $\mu$ be the Haar measure of $G$ and $\mu_0$ be the Haar measure of $G_0$.
We focus on a unitary representation $f$ of $G$ over a 
finite-dimensional Hilbert space.
Under the unitary operation by $f$,
we study the hypothesis testing for whether the true unitary 
is subject to $\mu$ or one of $G_0$.
This setting includes the case when $G_0$ (and $G$) is a finite group.
For this aim, we choose an input state and a binary measurement for our decision.
We assume that any entangled input state with any reference system is available.

To consider this problem, we denote the set of the labels of 
irreducible representations of $G$ by $\hat{G}$.
Let ${\cal U}_\lambda$ be the irreducible representation space 
identified by $\lambda \in \hat{G}$, and
$d_\lambda$ be its dimension.
We also define the twirling operation for the group $G$ as
\begin{align}
{\cal T}_G(\rho):=\int_{G}
 f(g')\rho f(g')^\dagger \mu(dg').
\end{align}
When we focus on a subgroup $G_0$, we define 
${\cal T}_{G_0}$ by using the Haar measure on $G_0$ in the same way.

\section{Detailed proofs of the main results}\label{sec: appendix main proof}
In this section, we will provide the detailed proof of the main results.

\subsection{General problem formulation and Statement of Theorem \ref{NMI} }\label{appendix:thm_3}

We consider $n$ representation spaces ${\cal H}_1, \ldots, {\cal H}_n$ 
of $G$.
We consider each representation of ${\cal H}_j$
as a channel with the input system ${\cal H}_j$
and the output system ${\cal H}_j$ for $j=1, \ldots,n$.
We consider indefinite-causal-order measurement on 
these $n$ channels.
We describe each channel, i.e., each unitary action by 
$d$ times of its Choi representation, where
$d$ is the dimension of the whole output system.
We denote this representation of the unitary $U$ 
by $|U\rangle\rangle\langle \langle U|$.
Here, when $U$ is a matrix $\sum_{k,k'}u_{k,k'} |k\rangle \langle k'|$,
$|U\rangle\rangle$ is defined as
\begin{align}
|U\rangle\rangle:= \sum_{k,k'} u_{k,k'} |k,k'\rangle.
\end{align}
For example, when ${\cal H}={\cal H}_1 \oplus {\cal H}_2$, we have
\begin{align}
|I_{\cal H}\rangle\rangle= 
|I_{{\cal H}_1}\rangle\rangle \oplus |I_{{\cal H}_2}\rangle\rangle. \label{NJ1}
\end{align}
When ${\cal H}={\cal H}_1 \otimes {\cal H}_2$, we have
\begin{align}
|I_{\cal H}\rangle\rangle= 
|I_{{\cal H}_1}\rangle\rangle \otimes |I_{{\cal H}_2}\rangle\rangle. \label{NJ2}
\end{align}

A measurement operator $T_k$ is given as a positive semi-definite operator 
over the system $\otimes_{j=1}^n {\cal H}_j^{\otimes 2}$.
For the set $\{T_k\}$,
the operator $\sum_{k}T_k$ needs to satisfy the 
Indefinite-Causal-Order Strategy linear condition, ICO-condition, whose detail is given in \cite[Eqs. (4),(5),(6)]{Bavaresco_2021}.


We consider a test $T=\{T_0,T_1\}$, where 
$T_0$ supports an operation of the subgroup $G_0$ 
and
$T_1$ supports the Haar measure on $G$. 
$T_0+T_1$ satisfies the ICO-condition.

Now, the whole system is 
$(\otimes_{j=1}^n {\cal H}_j)\otimes (\otimes_{j=1}^n {\cal H}_j)$,
the group $G$ acts only on the first system $(\otimes_{j=1}^n {\cal H}_j)$, and the second system
${\cal K}:=(\otimes_{j=1}^n {\cal H}_j)$ can be considered as multiplicity space.
We denote the dimension of ${\cal H}$ by $d$.
So, $\dim{\cal K}=d$. 

We denote the representation on the first system 
${\cal H}:=(\otimes_{j=1}^n {\cal H}_j)$ as
\begin{align}
{\cal H}= \bigoplus_{\lambda \in \hat{G}_f}
{\cal U}_\lambda \otimes \mathbb{C}^{n_\lambda},
\end{align}
where 
$n_\lambda$ expresses the multiplicity of the representation 
space ${\cal U}_\lambda$.
In the following, we denote the dimension of irreducible space
${\cal U}_\lambda$ by $d_\lambda$.
Here, we denote the representation on ${\cal H}$ by $f$.
We denote the set of irreducible representations appearing in $f$
by $\hat{G}_f$.
An operator $T_0$ is called invariant for $G_0$ when 
\begin{align}
f(g) T_0 f(g)^\dagger = T_0
\end{align}
for any $g \in G_0$.
Then, the whose system is written as
\begin{align}
{\cal H}\otimes {\cal K}
= 
\Big(\bigoplus_{\lambda \in \hat{G}_f}
{\cal U}_\lambda \otimes \mathbb{C}^{n_\lambda} \Big)
\otimes 
\Big(\bigoplus_{\lambda' \in \hat{G}_f}
\mathbb{C}^{d_{\lambda'}} \otimes \mathbb{C}^{n_{\lambda'}} \Big)
\end{align}
By considering \eqref{NJ1} and \eqref{NJ2},
the application of $g$ is written as
\begin{align}
|f(g)\rangle\rangle=&
f(g)|I_{\cal H}\rangle\rangle
=f (g)
 \bigoplus_{\lambda \in \hat{G}_f}
| I_{\lambda}\rangle\rangle \otimes | I_{n_\lambda} \rangle\rangle \\
=&
 \bigoplus_{\lambda \in \hat{G}_f}
f_\lambda (g)| I_{\lambda}\rangle\rangle \otimes | I_{n_\lambda} \rangle\rangle \\
=&
 \bigoplus_{\lambda \in \hat{G}_f}
| f_\lambda (g) \rangle\rangle \otimes | I_{n_\lambda} \rangle\rangle.
\end{align}
The average with respect to the Haar measure $\mu$ is denoted as $\rho_{\mu}^f$ in the main text. For simplicity, we denote it as $\rho_\mu$, which is defined as,
\begin{align}
\rho_{\mu}:=&\int_{G}|f(g)\rangle\rangle \langle\langle f(g)|\mu(dg)
=\bigoplus_{\lambda \in \hat{G}_f}
d_\lambda^{-1} I_\lambda \otimes I_\lambda \otimes 
| I_{n_\lambda} \rangle\rangle\langle\langle I_{n_\lambda}|.
\end{align}
To show the above relation, we 
denote the matrix component of $g_\lambda (g)$
as 
$g_{\lambda, k,l} (g) $.
It is known that 
the set $\{g_{\lambda, k,l} (g) \}_{k,l,\lambda}$
forms a completely orthogonal basis on the $L^2$ function space on $G$ under the inner product
 $\langle h_1,h_2\rangle:= 
 \int_{G} h_1(g) \overline{h}_2(g) \mu(dg)$ \cite[Lemma 2.14]{Group}. 
That is, \cite[Lemma 2.14]{Group} implies the above relation.

Similarly, the average with respect to the Haar measure $\mu_0$ is 
\begin{align}
\rho_{\mu_0}
:=&\int_{G_0}|f(g)\rangle\rangle \langle\langle f(g)|\mu_0(dg) .
\end{align}
We define
\begin{align}
D_{\max} (\rho_{\mu_0}\|\rho_{\mu}):= \min
\{t\,|\, 
e^{t}\rho_{\mu} \ge \rho_{\mu_0} \}.
\end{align}
Therefore, our problem is the calculation of
$$
\min_{T:ICO} \{ 
\Tr T \rho_{\mu} :
\min_{g \in G_0} \Tr T |f(g)\rangle\rangle \langle\langle f(g)|
\ge 1-\epsilon \} .$$
To consider this problem, we have the following theorem,

\begin{theorem}[Optimal type-II error of unitary subgroup hypothesis testing]\label{NMI}
For unitary subgroup hypothesis testing involving compact group $G$ and its subgroup $G_0$ with unitary representation $f$ and an error tolerance of $\epsilon$, the following holds,
\begin{align}
&\min_{|\psi\rangle\in {\cal H}\otimes{\cal K}} \min_{T:inv} \{ 
\Tr T {\cal T}_G( |\psi\rangle\langle \psi|) : \min_{g \in G_0} \Tr T f(g) |\psi\rangle\langle \psi| f(g)^\dagger
\ge 1-\epsilon \} \\
=&\min_{T:ICO,inv} \{ 
\Tr T \rho_{\mu} :
\min_{g \in G_0} \Tr T |f(g)\rangle\rangle \langle\langle f(g)|
\ge 1-\epsilon \} \\
=&\min_{T:ICO} \{ 
\Tr T \rho_{\mu} :
\min_{g \in G_0} \Tr T |f(g)\rangle\rangle \langle\langle f(g)|
\ge 1-\epsilon \} \\
=&\min_{T:ICO} \{ 
\Tr T \rho_{\mu} :
 \Tr T \rho_{\mu_0}
\ge 1-\epsilon \} \\
=&\min_{T\ge 0} \{ 
\Tr T \rho_{\mu} :
 \Tr T \rho_{\mu_0}
\ge 1-\epsilon \} \\
=&(1-\epsilon) e^{-D_{\max} (\rho_{\mu_0}\|\rho_{\mu})}.
\end{align}
\end{theorem}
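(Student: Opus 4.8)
The plan is to prove the six‑fold equality by sandwiching. Reading the displayed lines of Theorem~\ref{NMI} downward, dropping the ICO linear constraint is a relaxation, so line~5 $\le$ line~4; replacing the worst‑case type‑I requirement $\min_{g\in G_0}\Tr\big(T\kett{f(g)}\braa{f(g)}\big)\ge 1-\epsilon$ by its $G_0$‑average $\Tr(T\rho_{\mu_0})\ge 1-\epsilon$ enlarges the feasible set (an average dominates a minimum), so line~4 $\le$ line~3; imposing $G_0$‑invariance of $T$ shrinks the feasible set, so line~3 $\le$ line~2; and a parallel pure‑input strategy with invariant measurement is a particular ICO strategy whose induced tester, after twirling over $G_0$ (which preserves the ICO conditions because $f(g)=g^{\otimes n}$ acts locally on each use~\cite{Bavaresco_2021}, and leaves $\Tr(T\rho_\mu)$ unchanged since $\rho_\mu$ is $G_0$‑invariant), becomes invariant with worst‑case type‑I error equal to its average one, giving line~2 $\le$ line~1. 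Hence line~5 $\le$ line~4 $\le$ line~3 $\le$ line~2 $\le$ line~1, and it suffices to establish line~5 $=$ line~6 together with the reverse inequality line~1 $\le$ line~6.

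For line~5 $=$ line~6, by homogeneity the minimizer $T$ in $\min_{T\ge 0}\{\Tr(T\rho_\mu):\Tr(T\rho_{\mu_0})\ge 1-\epsilon\}$ may be taken to saturate the constraint, so the value is $(1-\epsilon)\min_{S\ge 0}\{\Tr(S\rho_\mu):\Tr(S\rho_{\mu_0})=1\}$; this SDP has Lagrange dual $\max\{y\ge 0:y\rho_{\mu_0}\le\rho_\mu\}$, which equals $e^{-D_{\max}(\rho_{\mu_0}\|\rho_\mu)}$ by the definition of $D_{\max}$, and strong duality holds since $G_0\subseteq G$ forces $\operatorname{supp}\rho_{\mu_0}\subseteq\operatorname{supp}\rho_\mu$, furnishing strictly feasible points on both sides. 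This also exhibits the whole chain as $(1-\epsilon)$ times its $\epsilon=0$ value.

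It remains to build a parallel protocol, with pure input and $G_0$‑invariant measurement, achieving type‑I error $0$ and type‑II error $e^{-D_{\max}(\rho_{\mu_0}\|\rho_\mu)}$; the factor $1-\epsilon$ then follows by scaling the optimal projector $\Pi$ to $(1-\epsilon)\Pi$. Using the isotypic decomposition $\rho_\mu=\bigoplus_{\lambda\in\hat G_f}d_\lambda^{-1}I_\lambda\otimes I_\lambda\otimes\kett{I_{n_\lambda}}\braa{I_{n_\lambda}}$ and its $G_0$‑analogue, pick an irreducible representation $\eta^\ast$ of $G_0$ occurring in $f$ that minimizes $d_{\eta,G_0}/\sum_\lambda d_\lambda n_{\eta,\lambda}$, and let $W\subseteq{\cal H}$ be its isotypic component, the sum over all $\lambda$ of the $n_{\eta^\ast,\lambda}$ copies of $\eta^\ast$ inside ${\cal U}_\lambda$. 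Take a reference large enough to entangle fully with $W$ and let $\ket{\psi}$ be the maximally entangled state of $W$ with the reference, with the $\lambda$‑block assigned weight proportional to $d_\lambda$. Since each embedded copy carries the $G_0$‑action $f_{\eta^\ast}$, the orbit $\{(f(g)\otimes I)\ket{\psi}:g\in G_0\}$ spans a subspace ${\cal S}$ obtained from $\ket{\psi}$ by unitaries on the reference alone, so ${\cal S}$ is $G_0$‑invariant and the measurement $\Pi_{\cal S}$ gives type‑I error exactly $0$. A Schur‑orthogonality computation---the matrix coefficients of the $f_\lambda$ being orthogonal in $L^2(G)$---then evaluates $\Tr\big(\Pi_{\cal S}\,{\cal T}_G(\ket{\psi}\bra{\psi})\big)=d_{\eta^\ast}/\sum_\lambda d_\lambda n_{\eta^\ast,\lambda}$, which equals $e^{-D_{\max}(\rho_{\mu_0}\|\rho_\mu)}$ by Theorem~\ref{appendix:THM4}.

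The main obstacle is precisely this construction. A uniformly maximally entangled input on all of ${\cal H}$ is already strictly suboptimal once restriction to $G_0$ mixes several $G$‑irreducibles of differing dimension; one must localize on the single bottleneck component $\eta^\ast$, split the amplitude among the $G$‑irreducibles containing it in proportion to their dimensions, and track the multiplicities $n_{\eta^\ast,\lambda}$ (whose exceeding $1$ in a common $\lambda$ is what forces a genuine reference system, as for identity testing, while copies lying in distinct $G$‑irreducibles need none, as for $Z$‑symmetry testing). Verifying that this input, paired with the cyclic‑subspace projector, saturates the semidefinite bound---and reconciling the ensuing combinatorial expression with $D_{\max}$---is the technical core; the relaxation inequalities and the duality step are routine.
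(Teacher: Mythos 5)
Your proof has the same skeleton as the paper's: the chain of relaxations (parallel-with-invariant-test $\ge$ ICO+invariant $\ge$ ICO $\ge$ average-constraint ICO $\ge$ unconstrained PSD), the identification of the fully relaxed value with $(1-\epsilon)e^{-D_{\max}(\rho_{\mu_0}\|\rho_\mu)}$ (the paper does this even more directly, exhibiting a rank-one achiever $|w\rangle$ with $e^{D_{\max}}\langle w|\rho_\mu|w\rangle=\langle w|\rho_{\mu_0}|w\rangle$ rather than invoking Slater), and an explicit zero-type-I parallel protocol whose test is the projector onto the $G_0$-orbit of the input — this projector is exactly the paper's $T_0=I_{\eta,G_0}\otimes I_{d_{\eta,G_0}}\otimes|F\rangle\langle F|$. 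The relaxation inequalities and the duality step are fine (the $G_0$-twirling you invoke for the parallel-to-ICO embedding is superfluous, since an invariant measurement on ${\cal H}\otimes{\cal K}$ already induces an invariant tester, but this is harmless).

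The genuine gap is in the achievability step, which you yourself call the technical core and which you assert rather than verify. Your input assigns the $\lambda$-block of the $\eta^*$-isotypic component total weight proportional to $d_\lambda$ ("split the amplitude among the $G$-irreducibles \ldots in proportion to their dimensions"); the correct block weight is $p_\lambda\propto d_\lambda n_{\eta^*,\lambda}$. For a state of your form with block weights $q_\lambda$, the orbit-projector test indeed has zero type-I error, but its type-II error is $d_{\eta^*,G_0}\sum_\lambda q_\lambda^2/(d_\lambda n_{\eta^*,\lambda})$, which by Cauchy--Schwarz is minimized only at $q_\lambda\propto d_\lambda n_{\eta^*,\lambda}$, where it equals $d_{\eta^*,G_0}/\sum_\lambda d_\lambda n_{\eta^*,\lambda}$. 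With $q_\lambda\propto d_\lambda$ it equals $d_{\eta^*,G_0}\,(\sum_\lambda d_\lambda)^{-2}\sum_\lambda d_\lambda/n_{\eta^*,\lambda}$, strictly larger whenever the multiplicities $n_{\eta^*,\lambda}$ are not all equal — precisely the identity-testing case, where $\eta^*$ is trivial and $n_{\eta^*,\lambda}=d_\lambda$: for $\mathrm{SU}(2)$ with $n=2m$ copies your state yields $(m+1)^{-3}$ instead of the optimal $3/((m+1)(2m+1)(2m+3))$. Hence the claimed evaluation $\Tr\big(\Pi_{\cal S}{\cal T}_G(|\psi\rangle\langle\psi|)\big)=d_{\eta^*,G_0}/\sum_\lambda d_\lambda n_{\eta^*,\lambda}$ fails for the state you describe, the sandwich does not close, and the six-fold equality is not established. (If by "weight" you meant the squared amplitude of each individual Schmidt vector, which does give block weight $\propto d_\lambda n_{\eta^*,\lambda}$, then your construction coincides with the paper's, but you must say so and actually carry out the computation.) The repair is the paper's choice: $|\psi\rangle=d_{\eta^*,G_0}^{-1/2}|f_{\eta^*,G_0}(e)\rangle\rangle\otimes\sum_\lambda\sqrt{p_\lambda/n_{\eta^*,\lambda}}\,|I_{n_{\eta^*,\lambda}}\rangle\rangle$ with $p_\lambda=d_\lambda n_{\eta^*,\lambda}/\sum_{\lambda'}d_{\lambda'}n_{\eta^*,\lambda'}$, followed by the explicit trace computation.
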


We remark that this theorem still holds even when the ICO-condition is replaced by another linear condition.  For example, this theorem still holds when only Eqs. (4),(5) of \cite{Bavaresco_2021} is imposed instead of Eqs. (4),(5),(6) of \cite{Bavaresco_2021} for the two-system case.

It is first easy to observe that,
\begin{align}
&\min_{|\psi\rangle\in {\cal H}\otimes{\cal K}} \min_{T:inv} \{ 
\Tr T {\cal T}_G( |\psi\rangle\langle \psi|) :
\min_{g \in G_0} \Tr T f(g) |\psi\rangle\langle \psi| f(g)^\dagger
\ge 1-\epsilon \} \notag\\
\ge &\min_{T:ICO,inv} \{ 
\Tr T \rho_{\mu} :
\min_{g \in G_0} \Tr T |f(g)\rangle\rangle \langle\langle f(g)|
\ge 1-\epsilon \} \notag\\
\ge &\min_{T:ICO} \{ 
\Tr T \rho_{\mu} :
\min_{g \in G_0} \Tr T |f(g)\rangle\rangle \langle\langle f(g)|
\ge 1-\epsilon \} \notag\\
\ge &\min_{T:ICO} \{ 
\Tr T \rho_{\mu} :
 \Tr T \rho_{\mu_0}
\ge 1-\epsilon \}  \notag\\
\ge &\min_{T\ge 0} \{ 
\Tr T \rho_{\mu} :
 \Tr T \rho_{\mu_0} \ge 1-\epsilon \} ,
\end{align}
Then, Theorem \ref{NMI} can be shown by proving the following two parts,
\begin{align}
\min_{T\ge 0} \{ 
\Tr T \rho_{\mu} :
 \Tr T \rho_{\mu_0}
\ge 1-\epsilon \} 
=(1-\epsilon) e^{-D_{\max} (\rho_{\mu_0}\|\rho_{\mu})}.\label{KH1}
\end{align}
and
\begin{align}
\min_{|\psi\rangle\in {\cal H}\otimes{\cal K}} \min_{T:inv} \{ 
\Tr T {\cal T}_G( |\psi\rangle\langle \psi|) : \min_{g \in G_0} \Tr T f(g) |\psi\rangle\langle \psi| f(g)^\dagger
\ge 1-\epsilon \} 
\le (1-\epsilon) e^{-D_{\max} (\rho_{\mu_0}\|\rho_{\mu})}.\label{KH2}
\end{align}
We will detail the proof for Eq.\eqref{KH1} and Eq.\eqref{KH2} in Section \ref{sec:KH1} and Section \ref{sec:KH2}, respectively.

\subsection{Proof of \eqref{KH1}}\label{sec:KH1}
For any $T\ge 0$, we have 
\begin{align}
e^{D_{\max} (\rho_{\mu_0}\|\rho_{\mu})}
\Tr T \rho_{\mu} \ge \Tr T \rho_{\mu_0} .
\end{align}
Hence, we have
\begin{align}
\min_{T\ge 0} \{ 
\Tr T \rho_{\mu} :
 \Tr T \rho_{\mu_0}
\ge 1-\epsilon \} 
\ge (1-\epsilon) e^{-D_{\max} (\rho_{\mu_0}\|\rho_{\mu})}.\label{BNR}
\end{align}
Also, we choose a vector $|w\rangle$ such that
\begin{align}
e^{D_{\max} (\rho_{\mu_0}\|\rho_{\mu})}
\langle w| \rho_{\mu}|w\rangle = \langle w| \rho_{\mu_0} |w\rangle.
\end{align}
The operator $T=
\frac{1-\epsilon}{\langle w| \rho_{\mu_0} |w\rangle}
|w\rangle\langle w| $ achieves the equality in \eqref{BNR}.

\subsection{Calculation of $D_{\max} (\rho_{\mu_0}\|\rho_{\mu})$}\label{appendix:cal_max_relative}
We consider the irreducible unitary representations of $G_0$.
We denote the irreducible unitary representation space identified by
 $\eta \in \hat{G}_0$ by
 ${\cal U}_{\eta,G_0}$.
In the following, we denote the dimension of irreducible space
${\cal U}_{\eta,G_0}$ by $d_{\eta,G_0}$.
We consider the irreducible decomposition of ${\cal U}_{\lambda}$ as the representation of $G_0$ as
\begin{align}
{\cal U}_{\lambda}=
\bigoplus_{\eta\in\hat{G}_{0,\lambda}}
{\cal U}_{\eta,G_0}\otimes \mathbb{C}^{n_{\eta,\lambda}},
\end{align}
where $\hat{G}_{0,\lambda}$ is the set of irreducible unitary representations of $G_0$ that appears in ${\cal U}_{\lambda}$
We denote the dimension of ${\cal U}_{\eta,G_0}$ by $d_{\eta,G_0}$.

The state on
${\cal U}_{\lambda}\otimes \mathbb{C}^{d_{\lambda}}$
is written as
\begin{align}
|f_\lambda (g)\rangle \rangle
=
\bigoplus_{\eta\in\hat{G}_{0,\lambda}}
|f_{\eta,G_0} (g)\rangle \rangle
\otimes |I_{n_{\eta,\lambda}}\rangle \rangle
\end{align}

The application of $g$ is written as
\begin{align}
|f(g)\rangle\rangle:
= &
\bigoplus_{\lambda \in \hat{G}_f}
\bigoplus_{\eta\in\hat{G}_{0,\lambda}}
|f_{\eta,G_0} (g)\rangle \rangle
\otimes |I_{n_{\eta,\lambda}}\rangle \rangle
\otimes | I_{n_\lambda} \rangle\rangle \notag\\
= &
\bigoplus_{\eta}
|f_{\eta,G_0} (g)\rangle \rangle \otimes
\Big(\bigoplus_{\lambda \in \hat{G}_f}
 |I_{n_{\eta,\lambda}}\rangle \rangle
\otimes | I_{n_\lambda} \rangle\rangle\Big) \notag\\
= &
\bigoplus_{\eta}
|f_{\eta,G_0} (g)\rangle \rangle \otimes
|X_\eta\rangle\rangle,
\end{align}
where we define $|X_\eta\rangle\rangle:=
\Big(\bigoplus_{\lambda \in \hat{G}_f}
 |I_{n_{\eta,\lambda}}\rangle \rangle
\otimes | I_{n_\lambda} \rangle\rangle\Big)$.

The average with respect to the Haar measure $\mu_0$ is 
\begin{align}
\rho_{\mu_0}=
\bigoplus_{\eta}
d_{\eta,G_0}^{-1}
 I_{\eta,G_0} \otimes  I_{\eta,G_0} \otimes
 |X_\eta \rangle \rangle
\langle\langle X_\eta |.
\end{align}

\begin{theorem}[General solutions]\label{appendix:THM4}
The optimal type-II error for unitary subgroup hypothesis testing of unitaries draw from $\mu$ and $\mu_{0}$ is given by,
\begin{align}
e^{D_{\max} (\rho_{\mu_0}\|\rho_{\mu})}
=
\max_{\eta}
d_{\eta,G_0}^{-1}  \sum_{\lambda} d_\lambda n_{\eta,\lambda}. \label{BNX}
\end{align}
\end{theorem}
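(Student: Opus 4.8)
The plan is to evaluate $e^{D_{\max}(\rho_{\mu_0}\|\rho_\mu)}=\min\{c>0:c\,\rho_\mu\ge\rho_{\mu_0}\}$ directly from the definition, by conjugating the operator inequality with $\rho_\mu^{-1/2}$ and then reading off the answer from the explicit block forms of $\rho_\mu$ and $\rho_{\mu_0}$ established in the preceding subsections.

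First I would note that $\mathrm{supp}(\rho_{\mu_0})=\mathrm{span}\{|f(g)\rangle\rangle:g\in G_0\}\subseteq\mathrm{span}\{|f(g)\rangle\rangle:g\in G\}=\mathrm{supp}(\rho_\mu)$, so $\rho_\mu^{-1/2}$ (defined on $\mathrm{supp}(\rho_\mu)$) makes the conjugation meaningful and $c\,\rho_\mu\ge\rho_{\mu_0}$ is equivalent to $c\ge\|\rho_\mu^{-1/2}\rho_{\mu_0}\rho_\mu^{-1/2}\|_\infty$. From $\rho_\mu=\bigoplus_\lambda d_\lambda^{-1}I_\lambda\otimes I_\lambda\otimes|I_{n_\lambda}\rangle\rangle\langle\langle I_{n_\lambda}|$ and $\langle\langle I_{n_\lambda}|I_{n_\lambda}\rangle\rangle=n_\lambda$, the restriction of $\rho_\mu$ to the $\lambda$-block equals the scalar $d_\lambda^{-1}n_\lambda$ on its support, hence $\rho_\mu^{-1/2}=\bigoplus_\lambda\sqrt{d_\lambda/n_\lambda}\,\Pi_\lambda$ with $\Pi_\lambda$ the projector onto $\mathrm{supp}(\rho_\mu|_\lambda)$.

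Next I would expand $\rho_{\mu_0}=\bigoplus_\eta d_{\eta,G_0}^{-1}(I_{\eta,G_0}\otimes I_{\eta,G_0})\otimes|X_\eta\rangle\rangle\langle\langle X_\eta|$ with $|X_\eta\rangle\rangle=\bigoplus_\lambda|I_{n_{\eta,\lambda}}\rangle\rangle\otimes|I_{n_\lambda}\rangle\rangle$, and check that for each $\lambda$ the summand $|I_{n_{\eta,\lambda}}\rangle\rangle\otimes|I_{n_\lambda}\rangle\rangle$ lies in $\mathrm{supp}(\rho_\mu|_\lambda)$ (its $\mathbb{C}^{n_\lambda}\otimes\mathbb{C}^{n_\lambda}$-part is $|I_{n_\lambda}\rangle\rangle$ and the remaining factors fit inside ${\cal U}_\lambda\otimes\mathbb{C}^{d_\lambda}$). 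Since $\rho_\mu^{-1/2}$ multiplies everything in the $\lambda$-block by $\sqrt{d_\lambda/n_\lambda}$ and, commuting with the $G_0$-action on $\mathrm{supp}(\rho_\mu)$, preserves the $G_0$-isotypic grading, conjugation replaces $|X_\eta\rangle\rangle$ by $|Y_\eta\rangle:=\bigoplus_\lambda\sqrt{d_\lambda/n_\lambda}\,|I_{n_{\eta,\lambda}}\rangle\rangle\otimes|I_{n_\lambda}\rangle\rangle$ while leaving the $I_{\eta,G_0}\otimes I_{\eta,G_0}$ factor and the block structure intact, giving
\begin{equation*}
\rho_\mu^{-1/2}\rho_{\mu_0}\rho_\mu^{-1/2}=\bigoplus_\eta d_{\eta,G_0}^{-1}(I_{\eta,G_0}\otimes I_{\eta,G_0})\otimes|Y_\eta\rangle\langle Y_\eta|.
\end{equation*}
The summands of $|Y_\eta\rangle$ lie in orthogonal blocks with $\||I_{n_{\eta,\lambda}}\rangle\rangle\|^2\,\||I_{n_\lambda}\rangle\rangle\|^2=n_{\eta,\lambda}n_\lambda$, so $\||Y_\eta\rangle\|^2=\sum_\lambda d_\lambda n_{\eta,\lambda}$ and the $\eta$-block has operator norm $d_{\eta,G_0}^{-1}\sum_\lambda d_\lambda n_{\eta,\lambda}$. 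Maximizing over the mutually orthogonal $\eta$-blocks yields $\|\rho_\mu^{-1/2}\rho_{\mu_0}\rho_\mu^{-1/2}\|_\infty=\max_\eta d_{\eta,G_0}^{-1}\sum_\lambda d_\lambda n_{\eta,\lambda}$, which is exactly \eqref{BNX}.

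The step I expect to be the main obstacle is the bookkeeping between the two incompatible isotypic decompositions: $\rho_\mu$ is block-diagonal over $\hat{G}_f$, whereas $\rho_{\mu_0}$ is block-diagonal over the irreducible representations of $G_0$, and $|X_\eta\rangle\rangle$ genuinely spreads over all $\lambda$-blocks, so $\rho_{\mu_0}$ has nonvanishing $(\lambda,\lambda')$-off-diagonal pieces. The key observation making the conjugation clean is that every such piece is still supported, on both sides, in $\mathrm{supp}(\rho_\mu)$, where $\rho_\mu^{-1/2}$ acts as a mere scalar within each $\lambda$-block; once this containment is checked from the explicit formulas of the previous subsection, what remains is the immediate largest-eigenvalue computation of a direct sum of operators of the form $(\text{identity})\otimes|Y_\eta\rangle\langle Y_\eta|$.
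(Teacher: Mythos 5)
Your proposal is correct, and it leans on the same explicit block forms of $\rho_\mu$ and $\rho_{\mu_0}$ (and the implicit identification of the ${\cal U}_{\eta,G_0}\otimes\mathbb{C}^{d_{\eta,G_0}}$ factor across the different $\lambda$-blocks) that the paper establishes just before the theorem; the difference lies in how the final eigenvalue computation is organized. The paper first inserts the projections $P_\eta$ and $P_{2,\eta}$, uses commutativity to split $D_{\max}$ into a maximum over $\eta$, and then solves, inside each $\eta$-block, the weighted Rayleigh-quotient problem $\max_{|y\rangle}\langle y|S_{\eta,1}|y\rangle/\langle y|S_{\eta,2}|y\rangle$ via the Cauchy--Schwarz inequality with the optimizer $y_\lambda=d_\lambda$. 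You instead invoke the standard characterization $e^{D_{\max}(P\|Q)}=\bigl\|Q^{-1/2}PQ^{-1/2}\bigr\|_\infty$ (valid under the support inclusion, which you verify both abstractly via $\mathrm{span}\{|f(g)\rangle\rangle:g\in G_0\}\subseteq\mathrm{span}\{|f(g)\rangle\rangle:g\in G\}$ and concretely blockwise), observe that $\rho_\mu^{-1/2}$ is the scalar $\sqrt{d_\lambda/n_\lambda}$ on each $\lambda$-block of the relevant support, and thereby turn the conjugated operator into a direct sum of terms $d_{\eta,G_0}^{-1}(I_{\eta,G_0}\otimes I_{\eta,G_0})\otimes|Y_\eta\rangle\langle Y_\eta|$, whose norms are read off from $\||Y_\eta\rangle\|^2=\sum_\lambda d_\lambda n_{\eta,\lambda}$. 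This buys a shorter endgame — no explicit optimization or Cauchy--Schwarz step, since the rank-one structure makes the largest eigenvalue immediate — at the cost of having to handle the pseudo-inverse and the $(\lambda,\lambda')$ cross terms of $\rho_{\mu_0}$ carefully, which you do by checking that every such piece is supported in $\mathrm{supp}(\rho_\mu)$ where $\rho_\mu^{-1/2}$ acts blockwise as a scalar; the paper's projection route sidesteps the pseudo-inverse but pays with the two-stage restriction and the quotient maximization. Both arguments are sound and yield \eqref{BNX}.
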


\begin{proof}
We use the notation
$
|v_{\eta,\lambda}\rangle:= 
\frac{1}{n_\lambda}
 |I_{n_{\eta,\lambda}}\rangle \rangle
\otimes | I_{n_\lambda} \rangle\rangle$.
and choose an orthonormal basis of ${\cal U}_{\eta,G_0}$
by $\{ |u_{j,\eta}\rangle \}_j$.
We define the projection $P_{\eta} $
to the space spanned by $\{ |u_{j,\eta}\rangle|u_{j',\eta}\rangle  |v_{\eta,\lambda}\rangle \}_{\lambda,j,j'}$.
We have
\begin{align}
\rho_{\mu_0}=
\bigoplus_{\eta}
d_{\eta,G_0}^{-1} 
I_{\eta} \otimes I_{\eta} \otimes
\sum_{\lambda,\lambda'} |I_{n_{\eta,\lambda}}\rangle \rangle 
\langle \langle I_{n_{\eta,\lambda'}}|
\otimes 
| I_{n_\lambda} \rangle\rangle\langle\langle I_{n_{\lambda'}}| ,
\end{align}
and,
\begin{align}
\rho_{\mu}
=&\int_{G}|f(g)\rangle\rangle \langle\langle f(g)|\mu(dg) \notag\\
=&\bigoplus_{\lambda \in \hat{G}_f}
d_\lambda^{-1} I_\lambda \otimes I_\lambda \otimes 
| I_{n_\lambda} \rangle\rangle\langle\langle I_{n_\lambda}| \notag\\
=&
\bigoplus_{\lambda}
d_\lambda^{-1}
(\bigoplus_{\eta}  I_\eta \otimes I_{n_{\eta,\lambda}}) 
\otimes
(\bigoplus_{\eta'}  I_{\eta'} \otimes I_{n_{\eta',\lambda}}) 
\otimes
| I_{n_\lambda} \rangle\rangle\langle\langle I_{n_\lambda}| \notag\\
=&
\bigoplus_{\lambda}
d_\lambda^{-1}
\bigoplus_{\eta,\eta'}  
I_\eta
\otimes
I_{\eta'}
\otimes 
I_{n_{\eta,\lambda}} \otimes I_{n_{\eta',\lambda}}
\otimes
| I_{n_\lambda} \rangle\rangle\langle\langle I_{n_\lambda}| 
\end{align}

$\rho_{\mu_0}$
and $\rho_{\mu}$ are commutative $ P_\eta$.
Also, 
we have
\begin{align}
\rho_{\mu_0}= 
\rho_{\mu_0}\Big(\sum_\eta P_\eta\Big).
\end{align}
Thus, we have
\begin{align}
e^{D_{\max} (\rho_{\mu_0}\|\rho_{\mu})}
=
\max_{\eta}
e^{D_{\max} (\rho_{\mu_0}P_\eta\|P_\eta\rho_{\mu})}.\label{VB1}
\end{align}
Since we have
\begin{align}
\rho_{\mu_0}P_\eta
=& d_{\eta,G_0}^{-1} 
I_{\eta} \otimes I_{\eta} \otimes
\sum_{\lambda,\lambda'} |I_{n_{\eta,\lambda}}\rangle \rangle \langle \langle I_{n_{\eta,\lambda'}}|
\otimes 
| I_{n_\lambda} \rangle\rangle\langle\langle I_{n_{\lambda'}}| 
\\
\rho_{\mu}P_\eta
=&
I_\eta
\otimes
I_{\eta}
\otimes 
\bigoplus_{\lambda}
d_\lambda^{-1}
I_{n_{\eta,\lambda}} \otimes I_{n_{\eta,\lambda}}
\otimes
| I_{n_\lambda} \rangle\rangle\langle\langle I_{n_\lambda}| ,
\end{align}
we have
\begin{align}
e^{D_{\max} (\rho_{\mu_0}P_\eta\|P_\eta\rho_{\mu})}
=
e^{D_{\max} (S_{\eta,1} \| S_{\eta,2} )}\label{VB2}
\end{align}
where
\begin{align}
S_{\eta,1}:=&d_{\eta,G_0}^{-1} 
\sum_{\lambda,\lambda'} |I_{n_{\eta,\lambda}}\rangle \rangle \langle \langle I_{n_{\eta,\lambda'}}|
\otimes 
| I_{n_\lambda} \rangle\rangle\langle\langle I_{n_{\lambda'} }| 
\\
S_{\eta,2}:=&
\bigoplus_{\lambda}
d_\lambda^{-1}
I_{n_{\eta,\lambda}} \otimes I_{n_{\eta,\lambda}}
\otimes
| I_{n_\lambda } \rangle\rangle\langle\langle I_{n_\lambda}| .
\end{align}
We define 
the projection $P_{2,\eta}$ to the space spanned by $\{|v_{\eta,\lambda}\rangle\}_\lambda$.
Since $S_{\eta,1}$ and 
$S_{\eta,2}$ are commutative with $P_{2,\eta}$, we have
\begin{align}
e^{D_{\max} (S_{\eta,1} \| S_{\eta,2} )}
=
e^{D_{\max} (S_{\eta,1} P_{2,\eta}\| S_{\eta,2} P_{2,\eta} )}.\label{VB3}
\end{align}
We choose a vector $|y\rangle:= 
\sum_{\lambda} y_\lambda |v_{\eta,\lambda}\rangle$.
Then, we have
\begin{align}
e^{D_{\max} (S_{\eta,1} P_{2,\eta}\| S_{\eta,2} P_{2,\eta} )}
=&\max_{|y\rangle}
\frac{\langle y| S_{\eta,1} P_{2,\eta} |y\rangle}
{\langle y| S_{\eta,2} P_{2,\eta} |y\rangle} \\
=& \max_{|y\rangle}
\frac{d_{\eta,G_0}^{-1} |\sum_{\lambda} y_\lambda n_{\eta,\lambda} |^2 }{
\sum_\lambda d_\lambda^{-1} n_{\eta,\lambda} |y_\lambda|^2} \\
=& d_{\eta,G_0}^{-1} \max_{|y\rangle}
\frac{|\sum_{\lambda} d_\lambda^{-1} n_{\eta,\lambda} d_\lambda y_\lambda  |^2 }{
\sum_\lambda d_\lambda^{-1} n_{\eta,\lambda} |y_\lambda|^2} \\
\stackrel{(a)}{=} & d_{\eta,G_0}^{-1} 
\max_{|y\rangle}
\frac{
\sum_{\lambda} d_\lambda^{-1} n_{\eta,\lambda} d_\lambda^2
\sum_\lambda d_\lambda^{-1} n_{\eta,\lambda} |y_\lambda|^2
}{
\sum_\lambda d_\lambda^{-1} n_{\eta,\lambda} |y_\lambda|^2} \\
= & d_{\eta,G_0}^{-1} 
\max_{|y\rangle}
\sum_{\lambda} d_\lambda n_{\eta,\lambda}
=  d_{\eta,G_0}^{-1}  \sum_{\lambda} d_\lambda n_{\eta,\lambda}.\label{VB4}
\end{align}
where $(a)$ follows from Schwarz inequality for inner product
$(x,y):= \sum_{\lambda} d_\lambda^{-1} n_{\eta,\lambda} \overline{x_\lambda} y_\lambda$
and choice $y_\lambda=d_\lambda$. Combining \eqref{VB1},
\eqref{VB2}, \eqref{VB3}, and \eqref{VB4},
we have,
\begin{align}
e^{D_{\max} (\rho_{\mu_0}\|\rho_{\mu})}
=
\max_{\eta}
d_{\eta,G_0}^{-1}  \sum_{\lambda} d_\lambda n_{\eta,\lambda}.
\end{align}

\end{proof}

\subsection{Proof of \eqref{KH2}: Parallel scheme}\label{sec:KH2}
To show \eqref{KH2}, we discuss the parallel scheme.
The aim of this section is to show
\begin{align}
&\min_{|\psi\rangle\in {\cal H}\otimes{\cal K}} 
\min_{T:inv} \{ 
\Tr T {\cal T}_G( |\psi\rangle\langle \psi|) : \min_{g \in G_0} \Tr T f(g) |\psi\rangle\langle \psi| f(g)^\dagger
\ge 1-\epsilon \} 
\le 
(1-\epsilon)\Big(\max_{\eta}
d_{\eta,G_0}^{-1}  \sum_{\lambda} d_\lambda n_{\eta,\lambda} \Big)^{-1}.\label{NMNM}
\end{align}
The combination of \eqref{NMNM} and \eqref{BNX} shows \eqref{KH2}.
The optimal performance of 
the parallel scheme is the LHS of the above relation.

When a reference system $\mathbb{C}^{l}$ is available, we have
\begin{align}
{\cal H}\otimes \mathbb{C}^{l}
= \bigoplus_{\lambda \in \hat{G}_f}
{\cal U}_\lambda \otimes \mathbb{C}^{l n_\lambda}.\label{NMU}
\end{align}
However, when the input state is a pure state, 
the orbit is restricted into the following space by choosing 
a suitable subspace $\mathbb{C}^{\min(d_\lambda,l n_\lambda)}$
of $\mathbb{C}^{l n_\lambda}$.

That is, our representation space can be considered as follows,
\begin{align}
\bigoplus_{\lambda \in \hat{G}_f}
{\cal U}_\lambda \otimes \mathbb{C}^{\min(d_\lambda,l n_\lambda)}.
\end{align}
In the following, we consider the above case.
We denote the projection to $
{\cal U}_\lambda \otimes \mathbb{C}^{\min(d_\lambda,l n_\lambda)}$
by $P_\lambda$.
When $n \ge d_\lambda/n_\lambda$ for any $\lambda \in \hat{G}_f$,
our representation is given as
\begin{align}
{\cal H}':=\bigoplus_{\lambda \in \hat{G}_f}
{\cal U}_\lambda \otimes \mathbb{C}^{d_\lambda}.
\end{align}

An invariant parallel strategy is given as follows.
We choose an input state $|\psi\rangle=
\oplus_{\lambda \in \hat{G}_f}\sqrt{p_\lambda} 
|\psi_\lambda\rangle$.
We choose a positive-semidefinite matrix $T$ on ${\cal H}'$ with $I \ge T\ge 0$.
Also, $T$ needs to satisfy the invariance condition for $G_0$.

Since
\begin{align}
{\cal U}_\lambda=
\bigoplus_{\eta} {\cal U}_{\eta,G_0} \otimes \mathbb{C}^{n_{\eta,\lambda}},
\end{align}
we have
\begin{align}
{\cal H}':=
\bigoplus_{\lambda \in \hat{G}_f}
\bigoplus_{\eta} {\cal U}_{\eta,G_0} \otimes \mathbb{C}^{n_{\eta,\lambda}}
\otimes 
(\bigoplus_{\eta'} \mathbb{C}^{d_{\eta',G_0}} \otimes \mathbb{C}^{n_{\eta',\lambda}}).
\end{align}

Hence, considering the case when $\eta=\eta'$,
we have a subspace
\begin{align}
\bigoplus_{\eta} 
{\cal U}_{\eta,G_0} \otimes \mathbb{C}^{d_{\eta,G_0}} 
\otimes 
\Big(\bigoplus_{\lambda \in \hat{G}_f}
\mathbb{C}^{n_{\eta,\lambda}}
\otimes \mathbb{C}^{n_{\eta,\lambda}}\Big)
\subset {\cal H}'.
\end{align}
Given $\eta$, we focus on the subspace 
\begin{align}
&
{\cal U}_{\eta,G_0} \otimes \mathbb{C}^{d_{\eta,G_0}} 
\otimes 
\Big(\bigoplus_{\lambda \in \hat{G}_f}
\mathbb{C}^{n_{\eta,\lambda}}
\otimes \mathbb{C}^{n_{\eta,\lambda}}\Big)
=
\Big(\bigoplus_{\lambda \in \hat{G}_f}
{\cal U}_{\eta,G_0} 
\otimes 
\mathbb{C}^{n_{\eta,\lambda}}
\otimes \mathbb{C}^{d_{\eta,G_0}} 
\otimes \mathbb{C}^{n_{\eta,\lambda}}\Big) \subset {\cal H}'.
\label{NBBI}
\end{align}
Here, $\mathbb{C}^{d_{\eta,G_0}} 
\otimes \mathbb{C}^{n_{\eta,\lambda}}$
is a subspace of $\mathbb{C}^{l n_\lambda}$ in \eqref{NMU}.

We set the distribution on $\hat{G}_f$ as
\begin{align}
p_\lambda:= \frac{d_\lambda n_{\eta,\lambda}}{
\sum_{\lambda' \in \hat{G}_f}d_{\lambda'} n_{\eta,\lambda'}}.
\end{align}
We define the vector 
\begin{align}
|F\rangle:= 
\sum_{\lambda \in \hat{G}_f}
p_\lambda^{1/2} n_{\eta,\lambda}^{-1/2}|I_{n_{\eta,\lambda}}\rangle\rangle
\in \Big(\bigoplus_{\lambda \in \hat{G}_f}
\mathbb{C}^{n_{\eta,\lambda}}
\otimes \mathbb{C}^{n_{\eta,\lambda}}\Big)\label{BNG7}
\end{align}
Using a vector 
\begin{align}
|f_{\eta,G_0}(e)\rangle\rangle
\in {\cal U}_{\eta,G_0} \otimes \mathbb{C}^{d_{\eta,G_0}} ,
\label{BNG8}
\end{align}
we define the vector
\begin{align}
|\psi\rangle:=
d_{\eta,G_0}^{-1/2}
|f_{\eta,G_0}(e)\rangle\rangle
\otimes |F\rangle.
\end{align}
Since the relations \eqref{NBBI}, \eqref{BNG7}, and \eqref{BNG8} guarantee the relation
$|\psi\rangle\in {\cal H}'$,
we can use $|\psi\rangle$ as our input state.

We define the decision operator
\begin{align}
T_0:=  I_{\eta,G_0}\otimes I_{d_{\eta,G_0}} \otimes |F\rangle \langle F|,
\end{align}
which satisfies $0 \le T_0 \le I$ and the invariance condition for $G_0$.

For $g \in G_0$, we have
\begin{align}
& \Tr f(g)|\psi\rangle\langle \psi|f(g)^\dagger T_0
=
\Tr |\psi\rangle\langle \psi|f(g)^\dagger I_{\eta,G_0}\otimes |F\rangle \langle F| f(g) 
\notag\\
=&
\Tr |\psi\rangle\langle \psi| I_{\eta,G_0}\otimes |F\rangle \langle F|  
=
\langle \psi| I_{\eta,G_0}\otimes |F\rangle \langle F|   |\psi\rangle\notag\\
=& 
d_{\eta,G_0}^{-1}
\langle f_{\eta,G_0}(e)| I_{\eta,G_0} \otimes I_{d_{\eta,G_0}} |f_{\eta,G_0}(e)\rangle
=1.
\end{align}
Also, we have
\begin{align}
{\cal T}_G(|\psi\rangle \langle \psi|)
=\bigoplus_{\lambda} p_\lambda d_\lambda^{-1} I_{\lambda} \otimes   
d_{\eta,G_0}^{-1} I_{d_{\eta,G_0}} \otimes n_{\eta,\lambda}^{-1} I_{n_{\eta,\lambda}} .
\end{align}
We have
\begin{align}
&\Tr {\cal T}_G(|\psi\rangle \langle \psi|) T_0 \notag\\
=& \Tr
\Big(\bigoplus_{\lambda\in \hat{G}_f} p_\lambda d_\lambda^{-1} I_{\lambda} \otimes   
d_{\eta,G_0}^{-1} I_{\eta,G_0} \otimes n_{\eta,\lambda}^{-1} I_{n_{\eta,\lambda}} \Big)
( I_{\eta,G_0}\otimes I_{d_{\eta,G_0}}
 \otimes |F\rangle \langle F| )\notag\\
=&
\sum_{\lambda\in \hat{G}_f} p_\lambda d_\lambda^{-1} 
d_{\eta,G_0}^{-1} n_{\eta,\lambda}^{-1}
\Tr [(I_{\lambda} \otimes \otimes  I_{n_{\eta,\lambda}} )
( I_{\eta,G_0} \otimes |F\rangle \langle F|)]
 \Tr I_{d_{\eta,G_0}} \notag\\
 =&
\sum_{\lambda\in \hat{G}_f} p_\lambda d_\lambda^{-1} 
 n_{\eta,\lambda}^{-1}
d_{\eta,G_0} p_\lambda \notag\\
 =&
d_{\eta,G_0}  \sum_{\lambda \in \hat{G}_f} p_\lambda^2 d_\lambda^{-1} 
n_{\eta,\lambda}^{-1} 
=d_{\eta,G_0} \Big(\sum_{\lambda\in \hat{G}_f}d_{\lambda} n_{\eta,\lambda}\Big)^{-1}.
\end{align}
Therefore, 
choosing the optimal $\eta$ in the sense of \eqref{BNX} and 
setting $T=(1-\epsilon)T_0$,
we obtain \eqref{NMNM}. 

Now, we consider the relation
\begin{align}\label{NMIY}
d_{\eta,G_0} n_{\eta,\lambda} \le n_\lambda
\end{align}
holds for an element $\lambda \in \hat{G}_f$ and an element $\eta \in \hat{G}_0$.
Then, instead of \eqref{NBBI}, we have the following relation,
\begin{align}
&
{\cal U}_{\eta,G_0} \otimes \mathbb{C}^{d_{\eta,G_0}} 
\otimes 
\Big(\bigoplus_{\lambda \in \hat{G}_f:\eqref{NMIY} \hbox{ holds.}}
\mathbb{C}^{n_{\eta,\lambda}}
\otimes \mathbb{C}^{n_{\eta,\lambda}}\Big)
=
\Big(\bigoplus_{\lambda \in \hat{G}_f:\eqref{NMIY} \hbox{ holds.}}
{\cal U}_{\eta,G_0} 
\otimes 
\mathbb{C}^{n_{\eta,\lambda}}
\otimes \mathbb{C}^{d_{\eta,G_0}} 
\otimes \mathbb{C}^{n_{\eta,\lambda}}\Big) \subset {\cal H}.\label{NBBI2}
\end{align}
Hence, when we restrict the range of $\lambda \in \hat{G}_f$
in the above protocol
to the set $\{\lambda \in \hat{G}_f:\eqref{NMIY} \hbox{ holds.}\}$,
the above protocol can be implemented without 
any reference system.
Thus, we have the following theorem,
\begin{theorem}\label{TH4}
For unitary subgroup hypothesis testing involving compact group $G$ and its subgroup $G_0$ with unitary representation $f$, an error tolerance of $\epsilon$ and the irreducible space $U_\lambda$ corresponding to $f$ has the constraint \eqref{NMIY}, we have,
\begin{align}
\min_{|\psi\rangle \in {\cal H}} \min_{T:inv} \{ 
\Tr T {\cal T}_G( |\psi\rangle\langle \psi|) :
\min_{g \in G_0} \Tr T f(g) |\psi\rangle\langle \psi| f(g)^\dagger 
\ge  1-\epsilon \}
\le
(1-\epsilon)\Big(\max_{\eta}
d_{\eta,G_0}^{-1}  \sum_{\lambda:\eqref{NMIY} \hbox{ holds.}} d_\lambda n_{\eta,\lambda} \Big)^{-1}.
\end{align}
\end{theorem}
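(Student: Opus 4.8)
The plan is to re-run the parallel-scheme construction of Section \ref{sec:KH2} almost verbatim, the only change being that the sum over $\hat G_f$ is replaced by a sum over the restricted set $S_\eta := \{\lambda \in \hat G_f : d_{\eta,G_0}\,n_{\eta,\lambda} \le n_\lambda\}$ of those $\lambda$ for which \eqref{NMIY} holds, and then to observe that this restriction is exactly what makes the reference system superfluous. Since the inequality to be proved is an upper bound on the optimal value of the minimization over $|\psi\rangle \in {\cal H}$ and $G_0$-invariant $T$ with no ancilla, it suffices to exhibit, for each fixed $\eta \in \hat G_0$, a single feasible pair $(|\psi\rangle, T)$ whose type-II value equals $(1-\epsilon)\,d_{\eta,G_0}\big(\sum_{\lambda \in S_\eta} d_\lambda n_{\eta,\lambda}\big)^{-1}$, and then optimize over $\eta$.

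First I would fix $\eta$ and note that for each $\lambda \in S_\eta$ the condition \eqref{NMIY} is precisely the dimension count $\dim(\mathbb{C}^{d_{\eta,G_0}} \otimes \mathbb{C}^{n_{\eta,\lambda}}) = d_{\eta,G_0}\,n_{\eta,\lambda} \le n_\lambda = \dim \mathbb{C}^{n_\lambda}$, so $\mathbb{C}^{d_{\eta,G_0}} \otimes \mathbb{C}^{n_{\eta,\lambda}}$ embeds isometrically into the native multiplicity space $\mathbb{C}^{n_\lambda}$ of ${\cal U}_\lambda$ inside ${\cal H} = \bigoplus_\lambda {\cal U}_\lambda \otimes \mathbb{C}^{n_\lambda}$; for a \emph{pure} input state one only ever needs such a $\min(d_\lambda,n_\lambda)$-dimensional slice of the multiplicity space, which is why no ancilla is required once \eqref{NMIY} holds. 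Combining this over $\lambda \in S_\eta$ and using the $G_0$-branching ${\cal U}_\lambda = \bigoplus_{\eta'} {\cal U}_{\eta',G_0} \otimes \mathbb{C}^{n_{\eta',\lambda}}$ gives the inclusion \eqref{NBBI2}, i.e. ${\cal U}_{\eta,G_0} \otimes \mathbb{C}^{d_{\eta,G_0}} \otimes \big(\bigoplus_{\lambda \in S_\eta} \mathbb{C}^{n_{\eta,\lambda}} \otimes \mathbb{C}^{n_{\eta,\lambda}}\big) \subset {\cal H}$. I would then take the input state $|\psi\rangle := d_{\eta,G_0}^{-1/2} |f_{\eta,G_0}(e)\rangle\rangle \otimes |F\rangle$ and decision operator $T_0 := I_{\eta,G_0} \otimes I_{d_{\eta,G_0}} \otimes |F\rangle\langle F|$ as in \eqref{BNG7}--\eqref{BNG8}, but with the renormalized weights $p_\lambda := d_\lambda n_{\eta,\lambda}/\sum_{\lambda' \in S_\eta} d_{\lambda'} n_{\eta,\lambda'}$ and $|F\rangle := \sum_{\lambda \in S_\eta} p_\lambda^{1/2} n_{\eta,\lambda}^{-1/2} |I_{n_{\eta,\lambda}}\rangle\rangle$; the bounds $0 \le T_0 \le I$ and the $G_0$-invariance of $T_0$ follow exactly as before, since they only involve the ${\cal U}_{\eta,G_0} \otimes \mathbb{C}^{d_{\eta,G_0}}$ factor and the fixed vector $|F\rangle$.

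The remaining steps are the two trace identities of Section \ref{sec:KH2} with $\hat G_f$ replaced by $S_\eta$: for every $g \in G_0$ one obtains $\Tr f(g)|\psi\rangle\langle\psi|f(g)^\dagger T_0 = 1$ (so the type-I constraint holds automatically after scaling), and $\Tr {\cal T}_G(|\psi\rangle\langle\psi|) T_0 = d_{\eta,G_0}\big(\sum_{\lambda \in S_\eta} d_\lambda n_{\eta,\lambda}\big)^{-1}$. Setting $T := (1-\epsilon)T_0$ and taking the best $\eta$ then gives the asserted bound. The genuinely new point compared with Section \ref{sec:KH2} — and the step I would check most carefully — is the embedding claim: one must verify that \eqref{NMIY} is the exact dimension count that lets $\mathbb{C}^{d_{\eta,G_0}} \otimes \mathbb{C}^{n_{\eta,\lambda}}$ fit inside $\mathbb{C}^{n_\lambda}$ for all $\lambda \in S_\eta$ simultaneously, so that the whole construction lives in ${\cal H}$ itself, and that truncating the support of $|\psi\rangle$ and $|F\rangle$ to $S_\eta$ preserves normalization and the feasibility of $T_0$. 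Everything else is a transcription of the computation already carried out.
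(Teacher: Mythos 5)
Your proposal is correct and follows essentially the same route as the paper: the paper likewise reuses the parallel-scheme construction of Section \ref{sec:KH2} with the sum restricted to those $\lambda$ satisfying \eqref{NMIY}, invoking the inclusion \eqref{NBBI2} (which is exactly your dimension-count embedding of $\mathbb{C}^{d_{\eta,G_0}}\otimes\mathbb{C}^{n_{\eta,\lambda}}$ into $\mathbb{C}^{n_\lambda}$) to place $|\psi\rangle$ and $T_0$ inside ${\cal H}$ without any reference system, and then sets $T=(1-\epsilon)T_0$ and optimizes over $\eta$. Your renormalized weights $p_\lambda$ over $S_\eta$ and the two trace identities reproduce the paper's computation verbatim, so no gap remains.
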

As a special case, we have the following theorem.
\begin{theorem}\label{TH3}
Let $\eta_0$ be the optimal element in the sense of \eqref{BNX}.
When the relation \eqref{NMIY} 
holds for any $\lambda \in \hat{G}_f$ and the element $\eta=\eta_0 \in \hat{G}_0$,
the optimal test in Theorem \ref{NMI}
can be implemented by a parallel strategy without 
any reference system.
\end{theorem}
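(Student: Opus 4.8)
The plan is to obtain Theorem~\ref{TH3} as a direct specialization of Theorem~\ref{TH4}, combined with the exact value of the optimal type-II error already established in Theorems~\ref{NMI} and~\ref{appendix:THM4}. The guiding observation is that the hypothesis of the theorem, namely that \eqref{NMIY} holds for \emph{every} $\lambda\in\hat{G}_f$ at the optimal choice $\eta=\eta_0$, makes the index restriction appearing in Theorem~\ref{TH4} vacuous, so that the upper bound there collapses exactly onto the optimum.

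First I would record the two ingredients. By Theorem~\ref{appendix:THM4}, $e^{D_{\max}(\rho_{\mu_0}\|\rho_{\mu})}=\max_{\eta} d_{\eta,G_0}^{-1}\sum_{\lambda\in\hat{G}_f} d_\lambda n_{\eta,\lambda}$, and $\eta_0$ is by definition an element attaining this maximum. By Theorem~\ref{NMI}, the optimal reference-assisted parallel invariant test has value $(1-\epsilon)e^{-D_{\max}(\rho_{\mu_0}\|\rho_{\mu})}$; since a reference-free invariant test on ${\cal H}$ embeds into a reference-assisted test on ${\cal H}\otimes{\cal K}$ (extend the input by a fixed ancilla vector and the tester by the corresponding rank-one projector, which preserves $G_0$-invariance as well as all the traces appearing in the objective and the constraint), this same value is a \emph{lower} bound for the optimal reference-free parallel invariant test.

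Next I would apply Theorem~\ref{TH4} with $\eta=\eta_0$. Because \eqref{NMIY} holds for all $\lambda\in\hat{G}_f$ at $\eta=\eta_0$, the set $\{\lambda\in\hat{G}_f:\eqref{NMIY}\text{ holds}\}$ is all of $\hat{G}_f$, so the bound of Theorem~\ref{TH4} reads $(1-\epsilon)\big(d_{\eta_0,G_0}^{-1}\sum_{\lambda\in\hat{G}_f} d_\lambda n_{\eta_0,\lambda}\big)^{-1}=(1-\epsilon)e^{-D_{\max}(\rho_{\mu_0}\|\rho_{\mu})}$; here I would also note that, since the restricted sum never exceeds the full sum yet coincides with it at $\eta_0$, the restricted maximum in Theorem~\ref{TH4} genuinely equals $e^{D_{\max}(\rho_{\mu_0}\|\rho_{\mu})}$. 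Combining this upper bound with the lower bound from the previous paragraph shows the reference-free parallel invariant test attains $(1-\epsilon)e^{-D_{\max}(\rho_{\mu_0}\|\rho_{\mu})}$, i.e. the optimum of Theorem~\ref{NMI}, with explicit optimizer the input state $|\psi\rangle=d_{\eta_0,G_0}^{-1/2}|f_{\eta_0,G_0}(e)\rangle\rangle\otimes|F\rangle$ and operator $T=(1-\epsilon)\big(I_{\eta_0,G_0}\otimes I_{d_{\eta_0,G_0}}\otimes|F\rangle\langle F|\big)$ built in the proof of~\eqref{NMNM}, now living inside ${\cal H}$ by the inclusion~\eqref{NBBI2}. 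There is essentially no obstacle beyond bookkeeping; the one point deserving a line of care is the dimension count behind~\eqref{NBBI2}, namely that \eqref{NMIY} is precisely what lets $\mathbb{C}^{d_{\eta_0,G_0}}\otimes\mathbb{C}^{n_{\eta_0,\lambda}}$ sit inside the multiplicity space $\mathbb{C}^{n_\lambda}$ of ${\cal U}_\lambda$ in ${\cal H}$ with no auxiliary reference, together with the observation above that passing to the restricted maximum leaves the value at $\eta_0$ unchanged.
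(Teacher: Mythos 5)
Your proposal is correct and follows essentially the same route as the paper: Theorem~\ref{TH3} is obtained as the special case of Theorem~\ref{TH4} (via the construction of Section~\ref{sec:KH2} and the inclusion~\eqref{NBBI2}) in which the restriction to $\{\lambda:\eqref{NMIY}\text{ holds}\}$ is vacuous at $\eta_0$, so the reference-free bound collapses onto the optimum $(1-\epsilon)e^{-D_{\max}(\rho_{\mu_0}\|\rho_{\mu})}$ of Theorems~\ref{NMI} and~\ref{appendix:THM4}. Your added remarks (that the restricted maximum equals the full maximum at $\eta_0$, and that reference-free tests embed into the classes covered by Theorem~\ref{NMI}) are exactly the bookkeeping the paper leaves implicit.
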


\section{Applications via case study} \label{sec:appendix case study}
\subsection{Application: Identity Test via studying the case with $G_0=\{I\}$ and $G$=SU($d$)}
Next, we consider the case when $G=$SU($d$) and $G_0=\{I\}$.
We consider the $n$-fold tensor product representation $f_n(g):=g^{\otimes n}$. According to Theorem~\ref{appendix:THM4}, when $G_0=\{I\}$, we have only one one-dimensional irreducible representation of $G_0$.
Hence, we have $n_{\eta,\lambda}=d_{\lambda}$ so that
\begin{align}
e^{D_{\max} (\rho_{\mu_0}\|\rho_{\mu})}
=
\max_{\eta}
d_{\eta,G_0}^{-1}  \sum_{\lambda} d_\lambda n_{\eta,\lambda}
=
\sum_{\lambda} d_\lambda^2
\end{align}

When $d=2$, we have the following calculations \cite{Population}.
When $n=2m$, 
the label $\lambda \in \hat{G}_{f_{2m}}$ is given as 
$j=0, \ldots, m$ so that we have 
\begin{align}
\sum_{\lambda \in \hat{G}_{f_{2m}}} d_\lambda^2
=\sum_{j=0}^{m} (2j+1)^2=\frac{(m+1)(2m+1)(2m+3)}{3}.
\end{align}
In this case, the multiplicity $n_j$
is ${2m \choose m-j}-{2m \choose m-j-1}$
for $j=0, \ldots,m-1$, and $n_m$ is $1$.
Since
\begin{align}
d_j=2j+1 \le n_j ={2m \choose m-j}-{2m \choose m-j-1}
\end{align}
for $j=0, \ldots,m-1$,
RHS of \eqref{NMNM} is calculated to
$(1-\epsilon)(\sum_{j=0}^{m-1} (2j+1)^2)^{-1}$.
That is, this value can be attained without any reference system.
This kind of observation has been done 
for the case of 
unitary estimation in \cite{CDS,H06}.
The reference \cite{H06} introduced the word ``self-entanglement'' 
to characterize this phenomena.
When $n=2m-1$, we have
\begin{align}
\sum_{\lambda \in \hat{G}_{f_{2m-1}}} d_\lambda^2
=\sum_{j=1}^{m} (2j)^2=\frac{2m(m+1)(2m+1)}{3}.
\end{align}
In this case, the multiplicity $n_j$
is ${2m-1 \choose m-j}-{2m-1 \choose m-j-1}$
for $j=0, \ldots,m-1$, and $n_m$ is $1$.
Since
\begin{align}
d_j=2j \le n_j ={2m-1 \choose m-j}-{2m-1 \choose m-j-1}
\end{align}
for $j=0, \ldots,m-1$,
RHS of \eqref{NMNM} is calculated to
$(1-\epsilon)(\sum_{j=0}^{m-1} (2j)^2)^{-1}$.


We  then discuss the case when $n$ increases and we fix $d$.
We denote the set of Young inteces with length $n$ and depth $d$ by
$Y_n^d:=\{(\lambda_j)_{j=1}^d|
\sum_{j=1}^d\lambda_j=d, \lambda_j\ge \lambda_{j-1}
\}$ \cite{Group}, which equals $\hat{G}_{f_n}$.
Using \cite{Population}, we can show 
\begin{align}
\lim_{n\to \infty}\frac{\log (\sum_{\lambda \in Y_n^d} d_\lambda^2)}{\log n}
=d^2-1. \label{BNA} 
\end{align}
Since $ d_\lambda \le (n+1)^{\frac{d(d-1)}{2}}$ and
$|Y_n^d| \le (n+1)^{d-1}$, we have $\le $ in \eqref{BNA}.
The opposite inequality can be shown by using the relation
\begin{align}
d_\lambda= \prod_{1\le i<j\le d} \frac{j-i+\lambda_j-\lambda_i}{j-i}.
\end{align}

\subsection{Application: Z-symmetry Test via studying the case with commutative $G_0$ and $G$=SU($d$)}
Next, we assume that $G=$SU($d$) and
$G_0$ is a maximal commutative subgroup of $G$.
We consider the $n$-fold tensor product representation $f_n(g):=g^{\otimes n}$.
Hence, $d_{\eta,G_0}=1$.
When $d=2$, 
any weight uniquely identify an eigen vector in ${\cal U}_\lambda$.
Theorem \ref{TH3} guarantees that 
the optimal strategy does not need the reference system because 
$n_{\eta,\lambda}=1$.
When $n=2k$, we have
\begin{align}
\max_{\eta \in \cup_{\lambda \in \hat{G}_f} d_{\eta,G_0}^{-1}
\hat{G}_{0,\lambda}}
\sum_{ \lambda \in \hat{G}_0}
d_\lambda  n_{\eta,\lambda}
=&\max_{\eta \in \cup_{\lambda \in \hat{G}_f}
\hat{G}_{0,\lambda}}
\sum_{ \lambda \in \hat{G}_0}
d_\lambda n_{\eta,\lambda}\\
=&\sum_{j=0}^{k} 2j+1 
=k(k+1)+k+1=(k+1)^2=\frac{(n+2)^2}{4}.
\end{align}
When $n=2k-1$, we have
\begin{align}
\max_{\eta \in \cup_{\lambda \in \hat{G}_f}d_{\eta,G_0}^{-1}
\hat{G}_{0,\lambda}}
\sum_{ \lambda \in \hat{G}_0}
d_\lambda n_{\eta,\lambda}
=\max_{\eta \in \cup_{\lambda \in \hat{G}_f}
\hat{G}_{0,\lambda}}
\sum_{ \lambda \in \hat{G}_0}
d_\lambda n_{\eta,\lambda}
=\sum_{j=1}^{k} 2j 
=k(k+1)=\frac{(n+1)(n+3)}{4}.
\end{align}
We note that for calculating the type-II error in the case of arbitrary dimensions, the following question need to be addressed:
\begin{quote}
Does the relation 
$n_{\eta,\lambda} \le n_{\lambda}$ hold
for any $\eta\in\hat{G}_{0,\lambda}, \lambda\in\hat{G}_{f} $?
\end{quote}
If the relation holds, Theorem \ref{TH3} guarantees that the optimal strategy does not need the reference system. Furthermore, we may also show,
\begin{align}
&\lim_{n\to \infty}
\frac{\log \max_{\eta \in \cup_{\lambda \in \hat{G}_f}
\hat{G}_{0,\lambda}}
\sum_{ \lambda \in \hat{G}_0}
d_\lambda n_{\eta,\lambda}}{\log n}
=
\frac{d(d-1)}{2}+\frac{d(d-1)}{2(d-1)}+d-1=
\frac{d^2+2d-2}{2}.
\end{align}

\subsection{Application: T-symmetry Test via studying the case with $G_0=O(d)$ and $G$=U($d$)}\label{Appendix:t-symmetry}
Next, we consider the case with $G=$U($d$) and $G_0=$O($d$) and the $n$-fold tensor product representation $f_n(g):=g^{\otimes n}$. It is noted that U($d$) has the same irreducible spaces as SU($d$). 

First, we consider the case with $d=2$, i.e., 
the space $\mathbb{C}$ spanned by $|e_0\rangle$ and $|e_1\rangle$,
where $|e_j\rangle:=\frac{1}{\sqrt{2}}(|0\rangle+(-1)^j i|1\rangle)$.
O($2$) has different irreducible representation spaces
from the irreducible representations of the commutative group.
O($2$) is equivalent to $U(1)\times \mathbb{Z}_2$.
For $(e^{i \theta},a) \in U(1)\times \mathbb{Z}_2$, we have
\begin{align}
(e^{i \theta},a)|e_j\rangle=
(e^{i (-1)^j\theta},a)|e_{j\oplus a}\rangle.
\end{align}

We consider the case with $n=2k-1$.
All irreducible spaces of O($2$) are two-dimensional.
When $\eta$ is the irreducible representation with weights 
$|e_0\rangle^{\otimes l}|e_1\rangle^{\otimes n-l} $ 
and $|e_0\rangle^{\otimes n-l}|e_1\rangle^{\otimes l} $ with $l \le k-1$,
the dimension of 
the irreducible representation space of U($2$) 
to contain this irreducible space of O($2$) is 
 from $n-2l+1=2k-2l$ to $n+1=2k$.
 Therefore, 
\begin{align}
d_{\eta,G_0}^{-1}  \sum_{\lambda} d_\lambda n_{\eta,\lambda}
=\frac{1}{2}\sum_{j=k-l}^{k}2j
=\sum_{j=k-l}^{k}j=\frac{(l+1)(2k-l)}{2}.
\end{align}
 Then, we have
\begin{align}
&\max_\eta d_{\eta,G_0}^{-1}  \sum_{\lambda} d_\lambda n_{\eta,\lambda}
=\max_l \frac{(l+1)(2k-l)}{2}=\frac{k(k+1)}{2}
=\frac{(n+1)(n+3)}{8} .
\end{align}
 
Next, we consider the case with $n=2k$.
When $\eta$ is the irreducible representation with weights 
$|e_0\rangle^{\otimes l}|e_1\rangle^{\otimes n-l} $ 
and $|e_0\rangle^{\otimes n-l}|e_1\rangle^{\otimes l} $ with $l \le k-1$,
the irreducible space of O($2$) is two-dimensional
and the dimension of 
the irreducible representation space of U($2$)
to contain this irreducible space of O($2$) is 
 from $n-2l+1=2k-2l+1$ to $n+1=2k+1$.
 Therefore, 
\begin{align}
&d_{\eta,G_0}^{-1}  \sum_{\lambda} d_\lambda n_{\eta,\lambda}
=\frac{1}{2}\sum_{j=k-l}^{k}2j+1 
=\Big(\sum_{j=k-l}^{k}j\Big)+\frac{l+1}{2}
=\frac{(l+1)(2k-l)}{2}+\frac{l+1}{2}=\frac{(l+1)(2k-l+1)}{2}.
\end{align}
When $\eta$ is the irreducible representation with weights 
$|0\rangle^{\otimes k}|1\rangle^{\otimes k} $,
there are two types of one-dimensional irreducible spaces of O($2$).
One is 
$|e_0\rangle^{\otimes k}|e_1\rangle^{\otimes k} +|e_1\rangle^{\otimes k}|e_0\rangle^{\otimes k} $,
the other is 
$|e_0\rangle^{\otimes k}|e_1\rangle^{\otimes k} -|e_1\rangle^{\otimes k}|e_0\rangle^{\otimes k} $.
The dimension of the irreducible representation space of U($2$)
to contain the first-type irreducible space of O($2$) is 
$2k+1, 2k-3, \ldots, 3$ or $1$.
The dimension of the irreducible representation space of U($2$)
to contain the second-type irreducible space of O($2$) is 
$2k-1, 2k-5, \ldots, 3$ or $1$.
For the first-type of irreducible space, we have
 \begin{align}
&d_{\eta,G_0}^{-1}  \sum_{\lambda} d_\lambda n_{\eta,\lambda}
=(2k+1)+(2k-3)+  \ldots+ 3 \hbox{ or } 1=
\frac{(k+1)(k+2)}{2}.
\end{align}
For the second-type of irreducible space, we have
 \begin{align}
&d_{\eta,G_0}^{-1}  \sum_{\lambda} d_\lambda n_{\eta,\lambda}
=(2k-1)+(2k-5)+  \ldots+ 3 \hbox{ or } 1<\frac{(k+1)(k+2)}{2}.
\end{align}
Then, we have
\begin{align}
&\max_\eta d_{\eta,G_0}^{-1}  \sum_{\lambda} d_\lambda n_{\eta,\lambda}
=\max\Big(\frac{(k+1)(k+2)}{2},\max_l \frac{(l+1)(2k-l+1)}{2}\Big) 
=\max\Big(\frac{(k+1)(k+2)}{2},\frac{k(k+2)}{2}\Big)
=\frac{(k+1)(k+2)}{2}.
\end{align}

\end{document}